
\documentclass[12pt, twoside]{article}
\usepackage{amsmath,amsthm,amssymb}
\usepackage{times}
\usepackage{enumerate}

\pagestyle{myheadings}
\markboth{Th. Jankuhn, M. Olshanskii, and
A. Reusken}{Fluid problems on surfaces}


\theoremstyle{definition}
\newtheorem{thm}{Theorem}[section]

\newtheorem{lem}[thm]{Lemma}

\newtheorem{rem}[thm]{Remark}




\numberwithin{equation}{section}

\frenchspacing

\textwidth=167mm
\textheight=23cm
\parindent=16pt
\oddsidemargin=-0.5cm
\evensidemargin=-0.5cm
\topmargin=-0.5cm

\newcommand{\subjclass}[1]{\bigskip\noindent\emph{2010 Mathematics Subject Classification:}\enspace#1}
\newcommand{\keywords}[1]{\noindent\emph{Keywords:}\enspace#1}


\newcommand{\bA}{\mathbf A}
\newcommand{\bB}{\mathbf B}

\newcommand{\bH}{\mathbf H}
\newcommand{\bI}{\mathbf I}

\newcommand{\bP}{\mathbf P}

\newcommand{\ba}{\mathbf a}
\newcommand{\bb}{\mathbf b}

\newcommand{\bn}{\mathbf n}
\newcommand{\be}{\mathbf e}

\newcommand{\bT}{\mathbf T}
\newcommand{\bu}{\mathbf u}
\newcommand{\bv}{\mathbf v}

\newcommand{\bz}{\mathbf z}

\newcommand{\Div}{\mathop{\rm div}}
\newcommand{\divG}{{\mathop{\,\rm div}}_{\Gamma}}
\newcommand{\gradG}{\nabla_{\Gamma}}

\newcommand{\cL}{\mathcal L}

\newcommand{\Gs}{\mathcal{S}} 

\renewcommand{\div}{\textrm{div}\ \!}

\newcommand{\Rn}{\Bbb{R}^n}

\newcommand{\tr}{{\rm tr}}

\newcommand{\bsigma}{\boldsymbol{\sigma}}
\newcommand{\bphi}{\boldsymbol{\phi}}

\newcommand{\bxi}{\mbox{\boldmath$\xi$\unboldmath}}


\begin{document}




\title{Incompressible fluid problems on embedded surfaces: Modeling and variational formulations\thanks{Interfaces and Free Boundaries 20 (2018), 353--378~~DOI 10.4171/IFB/405}}

\author{Thomas Jankuhn \\
Institut f\"ur Geometrie und Praktische  Mathematik \\
RWTH-Aachen University, D-52056 Aachen, Germany\\
jankuhn@igpm.rwth-aachen.de\\[1ex]
Maxim A. Olshanskii\\
Department of Mathematics, University of Houston, Houston, Texas 77204\\
molshan@math.uh.edu\\[1ex]
Arnold Reusken\\
Institut f\"ur Geometrie und Praktische  Mathematik \\
RWTH-Aachen
University, D-52056 Aachen, Germany\\
reusken@igpm.rwth-aachen.de}

\date{}

\maketitle


\begin{abstract}
Governing equations of motion for a viscous incompressible  material  surface are derived from the balance laws of continuum mechanics.
The surface is treated as a time-dependent  smooth  orientable manifold of codimension one in an ambient Euclidian space. We use elementary tangential calculus to derive the governing equations in terms of exterior differential operators in  Cartesian coordinates. The resulting equations can be seen as the Navier-Stokes equations posed on an evolving manifold.
We consider a splitting of the surface Navier-Stokes system into   coupled equations for the tangential and normal motions of the material surface.
We then restrict ourselves to the case of a geometrically stationary manifold of codimension one embedded in $\Bbb{R}^n$.  For this case, we present new well-posedness results for the simplified surface fluid model consisting of the surface Stokes equations. Finally, we propose and analyze several alternative variational formulations for this  surface Stokes problem, including constrained and penalized formulations, which are convenient for Galerkin discretization methods.

\subjclass{37E35, 76A20, 35Q35, 35Q30, 76D05}

\keywords{Fluids on surfaces,  viscous  material  interface, fluidic membrane, Navier--Stokes equations on manifolds}
\end{abstract}

\section{Introduction}
Fluid equations on manifolds appear in the literature on  mathematical modelling  of emulsions, foams and  biological  membranes, e.g.~\cite{slattery2007interfacial,brenner2013interfacial,nitschke2012finite,reuther2015interplay}; they are also studied as a mathematical problem of its own interest, e.g.~\cite{ebin1970groups,Temam88,taylor1992analysis,arnol2013mathematical,mitrea2001navier,arnaudon2012lagrangian}.
In certain applications, such as the dynamics of liquid membranes~\cite{arroyo2009}, one is interested in formulations of fluid equations on evolving (time-dependent) surfaces. Such equations are considered in several places in the literature. The authors of \cite{arroyo2009} formulate a continuum model of
fluid membranes embedded in a bulk fluid, which includes  governing equations for a two-dimensional viscous fluid moving on a curved, time-evolving surface.  The derivation of a surface strain tensor in that paper uses techniques and notions  from differential geometry ($k$-forms). A similar model was derived from  balance laws for mass and momentum and associated constitutive equations in \cite{rangamani2013interaction}. The derivation and the resulting model uses intrinsic variables on a surface. Equations for surface fluids in the context of
two-phase flow are derived or used in \cite{BothePruess2010,barrett2014stable,nitschke2012finite,ReuskenZhang}. In those papers the surface fluid dynamics is strongly coupled through a no-slip condition with the bulk fluid dynamics. An energetic variational approach was recently used in \cite{Gigaetal} to derive the dynamical
system for the motion of an incompressible viscous fluid on  an evolving surface.

Computational methods and numerical analysis of these methods for fluid equations on surfaces is a relatively new field of research. Exploring the line of research starting from the seminal paper \cite{scriven1960dynamics}, it is noted in \cite{arroyo2009} and \cite{nitschke2012finite} that ``the equations of motion are formulated intrinsically in a two-dimensional manifold with time-varying metric and make extensive use of the covariant derivative and calculations in local coordinates, which involve the coefficients of the Riemannian connection and its derivatives. The complexity of the equations may explain why they are often written but never solved for arbitrary surfaces.'' Recent research addressing the numerical solution of fluid equations on surfaces includes   \cite{nitschke2012finite,rangamani2013interaction,rahimi2013curved,barrett2014stable,reuther2015interplay,rodrigues2015semi,ReuskenZhang}.

We discuss the  two main contributions of this paper. The first one is related to modeling.  Based on fundamental surface continuum mechanical principles treated in  \cite{GurtinMurdoch75,MurdochCohen79} we derive fluid equations on an evolving surface from the  conservation laws of mass and momentum for a viscous  material  surface embedded in an ambient continuum medium. We assume that the bulk medium interacts with the fluidic membrane through the area forces.  To derive the governing equations, we use \emph{only elementary tangential differential calculus} on a manifold. As a result, the surface PDEs that we derive are formulated in terms of   differential operators in the Cartesian coordinates. In particular, we 
avoid  the use of local coordinates. Using tangential differential operators makes the formulation more convenient for numerical purposes and facilitates the application of a level set method or other implicit surface representation techniques (no local coordinates or parametrization involved) to describe the surface evolution.
The resulting equations can be seen as the Navier-Stokes equations for a viscous incompressible 2D surface fluid posed on an evolving  manifold  embedded in  $\Bbb{R}^3$. The same equations have been derived and studied in the recent paper \cite{Gigaetal}. In that paper, however, the derivation is based on global energy principles instead of local conservation laws.
For gaining some further insight in this rather complex surface Navier-Stokes model, we  consider a splitting of the  system into   coupled equations for the tangential and normal motions of the material surface.
 The resulting equation for tangential motions agrees with one derived in \cite{Gigaetal}, but  differs from the one found in~\cite{arroyo2009}.
We comment on how the surface Navier-Stokes equations that we consider are related to other formulations of surface fluid equations found in the literature (Remarks~\ref{remArroyo} and Section~\ref{s_other}).

The second main contribution of this paper is a derivation of well-posedness results for a strongly simplified case.
We  restrict ourselves to  a geometrically stationary closed smooth manifold of codimension one, embedded in $\Bbb{R}^n$.   For this case, we present new well-posedness results for the surface Stokes equations.  Key ingredients in the analysis are a surface Korn's inequality and an inf-sup result for the Stokes bilinear form that couples surface pressure and surface velocity. We propose and analyze several different variational formulations of the  surface Stokes problem, including constrained and penalized formulations, which are convenient for Galerkin discretization methods.

The remainder of this paper is organized as follows. Section~\ref{s_prelim} collects necessary preliminaries and auxiliary results. In section~\ref{sectmodel} we derive the governing equations for the motion of
a viscous  material  surface, the surface Navier-Stokes system.  We also consider a directional splitting of the system and discuss alternative formulations of the surface fluid equations.
In section~\ref{s_Stokes} we prove a fundamental surface Korn's inequality and   well-posedness
of a variational formulation of the surface Stokes problem.  In sections~\ref{SecLagrange} and~\ref{s_aug} we introduce alternative weak formulations of the surface Stokes problem, which we believe are more convenient for Galerkin discretization methods such as surface finite element methods.

\section{Preliminaries}\label{s_prelim} This section recalls some basics of tangential calculus for evolving manifolds of codimension one. Several helpful auxiliary results are also proved in this section.
Consider $\Gamma(t) \subset \Bbb{R}^n$, $n \geq 3$, a $(n-1)$-dimensional closed, smooth, simply connected evolving manifold for $t\ge0$. We are mainly interested in $n=3$, but most of the analysis applies for general $n$. In the modeling part, section~\ref{sectmodel}, we only consider $n=3$. {  Concerning the smoothness conditions for $\Gamma(t)$ we note that it will sufficient to assume that  for any given $t\ge0$ the surface $\Gamma(t)$ has $C^3$ smoothness. In the remainder we always assume that this holds. For $k \in {0,1,2,}$ the spaces $C^k(\Gamma)$ are defined in the usual way via charts.}  The fact that the manifold is embedded in $\Bbb{R}^n$ plays a key role in the derivation and formulation of the PDEs. For example, for a $C^3$ manifold, a normal extension of $f\in C^k(\Gamma)$,  $k \in {0,1,2,}$ is a $C^k$-smooth function in a $\Bbb{R}^n$-neighborhood of $\Gamma$ and the surface differential operators  can be formulated in terms of differential operators in Euclidean space $\Bbb{R}^n$, with respect to the standard basis in $\Bbb{R}^n$.

The outward pointing normal vector on $\Gamma=\Gamma(t)$ is denoted by $\bn=\bn(x,t)$, and $\bP=\bP(x,t)=\bI- \bn\bn^T$ is the normal projector on the tangential space at $x\in\Gamma(t)$.
First we consider $\Gamma=\Gamma(t)$ for some fixed $t$ and introduce spatial differential operators. For $f:\, \Bbb{R}^n \to \Bbb{R}^{m}$, we denote by $\nabla f(x) \in L(\Bbb{R}^n,\Bbb{R}^{m})$  the Frechet derivative at $x \in \Bbb{R}^n$, where $L(\Bbb{R}^n,\Bbb{R}^{m})$ is the vector space of linear transformations from $\Bbb{R}^n$ to $\Bbb{R}^{m}$. We often skip the argument $x$ in the notation below.  The partial derivative is denoted by $\partial_i f= (\nabla f) \be_i \in \Bbb{R}^m$, $i=1,\ldots,n$.  Hence $(\nabla f) \bz= \sum_{j=1}^n \partial_j f z_j$ for $\bz \in \Bbb{R}^n$.
Note that for a scalar function $f$, i.e.,  $m=1$, $\nabla f$ is a \textit{row} vector. {  In the setting of this paper it is convenient to use this less standard row (instead of column) representatian for the gradient (e.g., the  formula $\nabla_\Gamma f =(\nabla f)\bP$  holds for the tangential gradient, cf.~\eqref{defdeli1}).  The vector $\nabla^T f:=(\nabla f)^T$ denotes the column gradient vector}. 

The \emph{tangential} derivative (along $\Gamma$) is defined as $(\nabla f) \bP \bz= \sum_{j=1}^n \partial_j f(\bP \bz)_j$ for $\bz \in \Bbb{R}^n$. For $m=1$, i.e, $f:\Bbb{R}^n \to \Bbb{R}$ the corresponding $i$-th (tangential) partial derivative is denoted by $\nabla_i$:
\begin{equation} \label{defdeli1}
 \nabla_i f= \sum_{j=1}^n \partial_j f(\bP \be_i)_j, ~~\text{and}~~\nabla_\Gamma f:=\big(\nabla_1f, \ldots, \nabla_n f\big) = (\nabla f)\bP.
\end{equation}
We also need such {covariant partial derivatives} for $m=n$ and $m=n\times n$. For $m=n$ the $i$-th covariant partial derivative of $\bv: \Rn \to \Rn$ is defined as
\begin{equation} \label{defdeli2}
 \nabla_i \bv= \sum_{j=1}^n \bP \partial_j  \bv (\bP \be_i)_j, ~~\text{and}~~\nabla_\Gamma \bv:=\big(\nabla_1 \bv\ldots \nabla_n \bv\big) = \bP (\nabla\bv) \bP.
\end{equation}
We shall use the notation $\nabla_\Gamma^T f:=(\nabla_\Gamma f)^T$, $\nabla_\Gamma^T \bv:=(\nabla_\Gamma \bv)^T$ for the transposed vector and matrix, and similarly for $\nabla_\Gamma$ replaced by $\nabla$.  For $m=n\times n$ the $i$-th covariant partial derivative of $\bA: \Rn \to \Bbb{R}^{n\times n}$ is defined as
\begin{equation} \label{defdeli3}
 \nabla_i \bA= \sum_{j=1}^n \bP \partial_j \bA \bP (\bP \be_i)_j, ~~\text{and}~~\nabla_\Gamma \bA:=\big(\nabla_1 \bA \ldots \nabla_n \bA\big).
\end{equation}
Note that from $\bn^T\bP =\bP \bn=0$ we get $\bP (\partial_j \bP)\bP= -\bP(\partial_j \bn \bn^T + \bn \partial_j \bn^T)\bP= 0$, hence $\nabla_i\bP=0$, $i=1,\ldots, n$, i.e., $\gradG \bP=0$. The covariant partial derivatives of $f$, $\bv$, or $\bA$ depend only
on the values of these fields on $\Gamma$. For scalar functions $f,g$ and vector functions $\bu,\bv: \Gamma\to \Bbb{R}^n$ we have the following product rules:
\begin{align}
 \nabla_\Gamma(fg) & = g  \nabla_\Gamma f +  f \nabla_\Gamma g \\
 \nabla_\Gamma(\bu\cdot\bv) & = \bv^T\nabla_\Gamma  \bu + \bu^T\gradG \bv, \quad { \text{if}~~\bP\bu=\bu,~\bP\bv=\bv,} \\
 \gradG(f \bu) &= f \gradG \bu + \bP \bu \gradG f. \label{id3}
\end{align}
Besides these covariant  derivatives we also need \emph{tangential divergence operators} for $\bv: \Gamma \to \Rn$ and $\bA: \Gamma \to \Bbb{R}^{n\times n}$.  These are defined as follows:
\begin{align}
 \divG \bv & := \tr (\gradG \bv)= \tr (\bP (\nabla\bv) \bP)=\tr (\bP (\nabla\bv)))=\tr ((\nabla\bv) \bP), \label{defdiv1} \\
 \divG \bA  & := \left( \divG (\be_1^T \bA),\,
               \dots, \,
               \divG (\be_n^T \bA)\right)^T.
                \label{defdiv2}
\end{align}
These tangential differential operators will be used in the modeling of conservation laws in section~\ref{sectmodel}. 
In particular, 
the  differential operator  $\bP \divG\big(\gradG \bv+\gradG^T \bv\big) $, which is the tangential analogon of the $\Div (\nabla\bv +\nabla^T\bv)$ operator in Euclidean space, plays a key role. We derive some properties of this differential operator.

We first relate $\bP \divG(\gradG \bv)$ to a Laplacian. For this  we introduce the space of smooth  tangential vector fields $C^k_T(\Gamma)^n:=\{\, \bv \in C^k(\Gamma)^n~|~\bP\bv =\bv \,\}$, with scalar product $(\bu,\bv)_0=\int_\Gamma \bu\cdot\bv \, ds$, and the space of  smooth tangential tensor fields  $C^k_T(\Gamma)^{n\times n}:=\{\, \bA \in C^k(\Gamma)^{n \times n}~|~\bP\bA \bP =\bA\,\}$, with scalar product $(\bA,\bB)_0:=\int_{\Gamma} \tr (\bA \bB^T) \, ds$. From the partial integration identity (see, e.g.,  (14.17) in \cite{GReusken2011}),
\[ \begin{split}
  \int_{\Gamma} \bv\cdot (\bP\divG \bA) \, ds & = \int_{\Gamma} \bv\cdot \divG \bA \, ds \\ &  = - \int_\Gamma \tr(\bA^T \gradG \bv)\,ds, \quad\bv \in C^1_T(\Gamma)^n, ~ \bA \in C^1_T(\Gamma)^{n\times n},
\end{split}
\]
it follows that for $\cL: C^1_T(\Gamma)^{n\times n}  \to C^0_T(\Gamma)^n$ given by $\cL(\bA)= \bP \divG (\bA)$ we have
\[
  (\cL(\bA),\bv)_0= -(\bA,\gradG \bv)_0 \quad \text{for all}~~\bv \in C^1_T(\Gamma)^n,~\bA \in C^1_T(\Gamma)^{n\times n}.
\]
Hence, $-\cL$ is the adjoint of $\gradG$, i.e., $\cL= - \gradG^\ast$. Thus we have
\begin{equation} \label{Laplacian}
  \bP \divG(\gradG \bv)= \cL(\gradG \bv)= - \gradG^\ast \gradG \bv=:  \Delta_\Gamma \bv.
\end{equation}
This \emph{vector Laplacian}  $\Delta_\Gamma$ is the so-called Bochner Laplacian \cite{rosenberg1997laplacian}.
It can be extended to a self-adjoint operator on a suitable space of vector fields on $\Gamma$.

The mapping $\bv \to \bP \divG\gradG^T \bv$ requires more calculations. Note that in Euclidean space we have $ \div(\nabla^T \bv)_i= \div(\be_i^T \nabla^T\bv)= \div(\partial_i\bv)= \partial_i( \div \bv)$. Hence, for divergence free functions $\bv$ we have $ \div\nabla^T   \bv=0$. For the corresponding surface differential operator we do not have a simple commutation relation, and the analysis becomes more complicated.  In \cite{arroyo2009,nitschke2012finite} this mapping is analyzed with intrinsic tools of differential geometry. It is, however, not clear how the divergence operators used in those papers, which are defined via differential forms, are related to the tangential divergence operator $\divG$ introduced above, which is defined in Euclidean space $\Rn$.  Lemma~\ref{LemGauss} below shows  a useful representation for $\bP \divG\gradG^T \bv$.  The proof of the lemma is given in the Appendix and it only uses elementary  tangential calculus.
For a vector field $\bv$  on $\Gamma(t)$ we shall use throughout the paper the notion $\bv_T=\bP\bv$  for the tangential part and
$v_N=\bv\cdot\bn$ for the normal coordinate, so that
\[
\bv=\bv_T+v_N\bn\quad\text{on}~\Gamma(t).
\]
\begin{lem} \label{LemGauss}
Let $\bH=\nabla_\Gamma\bn \in \Rn$ be the  Weingarten mapping (second fundamental form) on $\Gamma(t)$ and $\kappa:=\tr(\bH)$ the (doubled) mean curvature. The following holds:
\begin{align} \label{Res1}
 \bP \divG\gradG^T \bv & = \gradG^T \divG \bv+ \big(\tr(\bH)\bH-\bH^2\big) \bv,\quad \forall~\bv \in C^2_T(\Gamma)^n, \\
  \bn\cdot\divG\gradG^T \bv &= \bn\cdot \divG(\gradG \bv) \nonumber = - \tr(\bH \gradG \bv)\\ & =- \tr(\bH \gradG \bv_T)-v_N \tr(\bH^2),\quad \forall~\bv \in C^2(\Gamma)^n, \label{Res1a} \\
 \bP \divG (\bH) &= \gradG^T \kappa. \label{newres}
\end{align}
If $n=3$, then \eqref{Res1} simplifies to
\begin{equation} \label{Res2}
 \bP \divG\gradG^T \bv= \gradG^T \divG \bv+ K\bv,\quad \forall~\bv \in C^2_T(\Gamma)^3,
\end{equation}
where $K$ is the Gauss curvature, i.e. the product of the two principal curvatures.
\end{lem}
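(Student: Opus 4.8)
\noindent\emph{Proof sketch.}
The plan is a direct computation in Cartesian coordinates, using only the tangential calculus of this section. Since covariant derivatives depend only on the trace on $\Gamma$, I fix the signed-distance normal extension of $\bn$, for which $\nabla\bn$ is symmetric with $(\nabla\bn)\bn=0$; hence on $\Gamma$ one has $\bH=\gradG\bn=\bP(\nabla\bn)\bP=(\nabla\bn)\bP=\nabla\bn$, so $\bH=\bH^T$, $\bH\bn=0$ and $\bP\bH=\bH\bP=\bH$. I also use $\partial_j\bP=-(\partial_j\bn)\bn^T-\bn(\partial_j\bn)^T$. A preliminary reduction: from \eqref{defdiv1}--\eqref{defdiv2} and $\gradG^T\bv=(\gradG\bv)^T$ one has, componentwise, $(\divG\gradG^T\bv)_i=\divG\big((\gradG\bv)\be_i\big)=\divG(\nabla_i\bv)$, so \eqref{Res1}--\eqref{Res1a} amount to evaluating $\divG(\nabla_i\bv)=\tr\big(\bP\,\nabla(\nabla_i\bv)\big)$ and then contracting with $\bP$ or with $\bn$.

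For \eqref{Res1}: write $\nabla_i\bv=\bP(\nabla\bv)(\bP\be_i)$ and expand $\divG(\nabla_i\bv)$ by the Leibniz rule in the three factors $\bP$, $\nabla\bv$, $\bP\be_i$. The terms in which the extra derivative falls on $\nabla\bv$ (not on a projector) reassemble, after the outer contraction with $\bP$, into $\gradG^T\divG\bv$, exactly as the Euclidean identity $\Div\nabla^T\bv=\nabla^T\Div\bv$ arises when $\bP=\bI$ (the second-order-in-$\bv$ terms match entry by entry after using $\bP^2=\bP$). Every remaining term carries a factor $\partial_j\bP=-(\partial_j\bn)\bn^T-\bn(\partial_j\bn)^T$; substituting this and using $\bP\bv=\bv$, $\bn^T\bv=0$, the outer projector $\bP$, and $(\nabla\bn)\bP=\bH=\bH^T$, the contractions $\sum_j(\partial_j\bn)(\,\cdot\,)_j$ turn into factors of $\bH$, the first-order-in-$\bv$ contributions cancel, and the remainder is $\tr(\bH)\bH\bv-\bH^2\bv$. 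I expect this last step — checking that the projector-derivative terms collapse to exactly $\kappa\bH-\bH^2$ (right coefficients, right signs) and nothing else — to be the main obstacle; the rest is routine.

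For \eqref{Res1a}: for any tangential matrix field $\bA$ (so $\bA^T\bn=0$) a one-line Leibniz computation gives $\bn\cdot\divG\bA=\divG(\bA^T\bn)-\tr(\bH\bA)=-\tr(\bH\bA)$. Applying this to $\bA=\gradG^T\bv$ and to $\bA=\gradG\bv$, and using $\tr(\bH\gradG^T\bv)=\tr(\bH\gradG\bv)$ by symmetry of $\bH$, yields the first line of \eqref{Res1a}. For the second line, split $\bv=\bv_T+v_N\bn$; by the product rule \eqref{id3}, $\gradG(v_N\bn)=v_N\gradG\bn=v_N\bH$, so $\gradG\bv=\gradG\bv_T+v_N\bH$ and hence $\tr(\bH\gradG\bv)=\tr(\bH\gradG\bv_T)+v_N\tr(\bH^2)$.

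For \eqref{newres}: since $H_{ik}=\partial_i n_k$ on $\Gamma$ for the normal extension, equality of mixed partials gives the Codazzi-type symmetry $\partial_\ell H_{ik}=\partial_i H_{\ell k}$. Therefore $(\divG\bH)_i=\sum_{k,\ell}(\bP)_{k\ell}\partial_\ell H_{ik}=\sum_{k,\ell}(\bP)_{k\ell}\partial_i H_{\ell k}=\partial_i\tr(\bP\bH)-\sum_{k,\ell}(\partial_i(\bP)_{k\ell})H_{\ell k}$; the first term equals $\partial_i\kappa$ because $\bP\bH=\bH$, and the second vanishes since $\partial_i\bP$ produces only the factors $\bH\bn=0$ and $\bn^T\bH=0$. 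Thus $\divG\bH=\nabla^T\kappa$, i.e.\ $\bP\divG\bH=\bP\nabla^T\kappa=\gradG^T\kappa$. Finally, for \eqref{Res2} with $n=3$: at each point $\bH$ maps the two-dimensional tangent plane $T=\bn^\perp$ into itself with eigenvalues the principal curvatures $\kappa_1,\kappa_2$, so the Cayley--Hamilton theorem in dimension two gives $\bH^2\bv-\kappa\bH\bv+K\bv=0$ for every $\bv$ with $\bP\bv=\bv$ (using $\kappa=\kappa_1+\kappa_2$, $K=\kappa_1\kappa_2$, and $\bH\bv,\bH^2\bv\in T$). Hence $(\tr(\bH)\bH-\bH^2)\bv=K\bv$, which turns \eqref{Res1} into \eqref{Res2}.
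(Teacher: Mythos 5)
Your treatment of \eqref{Res1a}, \eqref{newres} and \eqref{Res2} is correct and follows essentially the same route as the paper's appendix: the one-line Leibniz identity $\bn\cdot\divG\bA=\divG(\bA^T\bn)-\tr(\bH\bA)=-\tr(\bH\bA)$ for fields with $\bP\bA\bP=\bA$, the splitting $\gradG\bv=\gradG\bv_T+v_N\bH$ via \eqref{id3}, the symmetry $\partial_\ell H_{ik}=\partial_i H_{\ell k}$ of the distance-function extension for \eqref{newres}, and Cayley--Hamilton on the two-dimensional range of $\bP$ for \eqref{Res2} are exactly the arguments used there. For \eqref{Res1} your overall strategy (expand $\divG\big(\bP(\nabla\bv)\bP\be_i\big)$ in Cartesian coordinates, identify the second-order block with the one in $\gradG^T\divG\bv$, and reduce the $\partial_j\bP$-terms using $\partial_j\bP=-(\partial_j\bn)\bn^T-\bn(\partial_j\bn)^T$) is also the paper's strategy.

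The gap is that for \eqref{Res1} you stop exactly at the step that constitutes the lemma. You assert that after the second-order terms are matched, the projector-derivative terms ``collapse to exactly $\kappa\bH-\bH^2$'' and explicitly flag this verification as not done; but that verification is the entire content of the identity, and it occupies the longest computation in the paper's appendix (the commutator $\nabla_k\nabla_i u_k-\nabla_i\nabla_k u_k$ in Lemma~7.2). Two points make it non-routine. First, the comparison is genuinely a two-sided commutator computation: $\gradG^T\divG\bv$, when expanded, also contains first-order terms with $\partial\bP$ factors (e.g.\ $P_{ir}(\partial_rP_{lk})\partial_k v_l$), so one cannot simply ``collect the leftovers'' on one side. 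Second, your bookkeeping claim that ``the first-order-in-$\bv$ contributions cancel'' is not what happens: the surviving first-order terms, such as $H_{rk}n_l\partial_k v_l\,P_{ir}$, do not cancel but are converted into zeroth-order curvature terms by differentiating the constraint $\bn\cdot\bv=0$ in a neighborhood (giving $n_l\partial_k v_l=-H_{kl}v_l$), and it is precisely this conversion, combined with $\bH\bn=0$, that produces the terms $\tr(\bH)\bH\bv$ and $-\bH^2\bv$ with their specific coefficients and signs. Until that computation is carried out, the curvature correction in \eqref{Res1} --- which is the whole point of the statement and the source of the $K\bv$ term in \eqref{Res2} --- is asserted rather than proved.
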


We now introduce some notations related to the evolution of $\Gamma(t)$ in time. Let $\Gs$ be the  $n$-dimensional manifold defined by the evolution of $\Gamma$,
 \[\Gs:=\bigcup\limits_{t>0}\Gamma(t)\times\{t\};\] the (space--time) manifold   $\Gs$ is embedded in $\mathbb{R}^{n+1}$.
  For the rest of the paper, we assume that $\Gs$ is $C^2$ smooth (and we continue to assume that $\Gamma(t)$ is $C^3$ smooth for any fixed $t\ge0$).
We  assume a flow field $\bu:\, \Rn\to \Rn$ such that $V_\Gamma=\bu\cdot \bn$ on $\Gs$,  where $V_\Gamma$ denotes the \textit{normal velocity of $\Gamma$}. For a smooth $f:\,\Rn\to\mathbb{R}$ we  consider the material derivative $\dot f$ (the derivative along material trajectories  in the velocity field $\bu$).
\[
\dot f = \frac{\partial f}{\partial t} + \sum_{i=1}^n \frac{\partial f}{\partial x_i} u_i=\frac{\partial f}{\partial t}+ (\nabla f)  \bu .
\]
The material derivative $\dot f$ is a tangential derivative for $\Gs$, and hence it depends only on the surface values of $f$ on $\Gamma(t)$. For a vector field $\bv$, we define $\dot\bv$ componentwise, i.e., $\dot\bv= \frac{\partial \bv}{\partial t} + (\nabla \bv)\bu$.
In Lemma~\ref{ident1} we derive some useful identities for the material derivative of the normal vector field and normal projector  on $\Gamma$. 

\begin{lem} \label{ident1}
 The following identities hold on $\Gamma(t)${\rm:}
 \begin{align} \label{id1}
  \dot \bn & = \bH \bu_T - \gradG^T u_N  =- \bP(\nabla^T \bu)\, \bn , \\
  \dot \bP & = \bP(\nabla^T \bu) (\bI-\bP) +(\bI-\bP)(\nabla \bu) \bP. \label{id1a}
 \end{align}
\end{lem}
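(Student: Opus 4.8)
The plan is to prove the formula for $\dot\bn$ first, in both of the forms stated in \eqref{id1}, and then to deduce \eqref{id1a} by an elementary manipulation of $\bP=\bI-\bn\bn^T$. A preliminary remark: differentiating $\bn\cdot\bn=1$ along material trajectories gives $\bn\cdot\dot\bn=0$, so $\dot\bn$ is a tangential vector field, and it suffices to identify it.

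For $\dot\bn$ I would use the signed distance function $d=d(\cdot,t)$ of $\Gamma(t)$ (say $d>0$ on the exterior side) and the extension $\bn^e:=\nabla^T d$, which agrees with $\bn$ on $\Gs$ and satisfies $|\nabla d|\equiv1$ in a tubular neighbourhood. Since the material derivative depends only on the values on $\Gs$, one has $\dot\bn=\partial_t\bn^e+(\nabla\bn^e)\bu$ on $\Gamma(t)$. The Hessian of $d$ restricted to $\Gamma(t)$ is the Weingarten map: $(\nabla\bn^e)\bn=\nabla^T\!\big(\tfrac12|\nabla d|^2\big)=0$, $\bn^T\nabla\bn^e=0$ by symmetry, and $\bP(\nabla\bn^e)\bP=\gradG\bn=\bH$, hence $\nabla\bn^e=\bH$ on $\Gamma(t)$ and $(\nabla\bn^e)\bu=\bH\bu=\bH\bu_T$ (using $\bH\bn=0$). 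For the remaining term, $\partial_t\bn^e=\nabla^T(\partial_t d)$; on $\Gamma(t)$ we have $\partial_t d=-V_\Gamma=-u_N$, and differentiating $|\nabla d|^2\equiv1$ in $t$ gives $\nabla d\cdot\nabla(\partial_t d)=0$, so $\nabla^T(\partial_t d)$ is tangential on $\Gamma(t)$ and therefore equals $\bP\,\nabla^T(\partial_t d)=\gradG^T\!\big(\partial_t d|_{\Gamma}\big)=-\gradG^T u_N$, as the tangential gradient depends only on surface values. Adding the two contributions gives $\dot\bn=\bH\bu_T-\gradG^T u_N$.

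To obtain the second expression in \eqref{id1}, I would apply the product rule to $u_N=\bu\cdot\bn$ using normal extensions, $\nabla(\bu\cdot\bn^e)=\bn^T\nabla\bu+\bu^T\nabla\bn^e$; restricting to $\Gamma(t)$, multiplying by $\bP$ on the right and transposing yields $\gradG^T u_N=\bP(\nabla^T\bu)\bn+\bH\bu_T$, so that $\dot\bn=\bH\bu_T-\gradG^T u_N=-\bP(\nabla^T\bu)\bn$. (The same form follows directly from a geometric argument: for a variation $\bx_s(t)$ of a material trajectory with tangent vector $\bw=\partial_s\bx_s\big|_{s=0}$, the curves $\bx_s(t)$ stay on $\Gs$, so $\bw(t)$ is tangent to $\Gamma(t)$ and satisfies $\dot\bw=(\nabla\bu)\bw$; then $0=\tfrac{d}{dt}(\bn\cdot\bw)=\dot\bn\cdot\bw+\bn^T(\nabla\bu)\bw$ for every tangential $\bw$, whence $\dot\bn=-\bP(\nabla^T\bu)\bn$.)

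Finally, for \eqref{id1a} I would differentiate $\bP=\bI-\bn\bn^T$ using the Leibniz rule for the material derivative, $\dot\bP=-\dot\bn\,\bn^T-\bn\,\dot\bn^T$, and substitute $\dot\bn=-\bP(\nabla^T\bu)\bn$; with $\bn\bn^T=\bI-\bP$, $(\nabla^T\bu)^T=\nabla\bu$ and $\bP^T=\bP$ the first term becomes $\bP(\nabla^T\bu)(\bI-\bP)$ and the second $(\bI-\bP)(\nabla\bu)\bP$, which is precisely \eqref{id1a}. I expect the only genuinely delicate point to be the identity $\partial_t\bn^e|_{\Gamma}=-\gradG^T u_N$ used in the second paragraph — i.e.\ showing that $\nabla(\partial_t d)$ is tangential along $\Gamma(t)$ and coincides there with the tangential gradient of the surface function $-u_N$ — which relies on the eikonal identity $|\nabla d|\equiv1$ together with a consistent sign convention for the signed distance; the rest is routine bookkeeping within the tangential calculus of Section~\ref{s_prelim}.
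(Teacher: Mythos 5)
Your proposal is correct and follows essentially the same route as the paper: the signed distance function with $\bn^T=\nabla d$, $\bH=\nabla^2 d$, the relation $\partial_t d=-u_N$ to get $\partial_t\bn^T=-\nabla_\Gamma u_N$ on $\Gamma(t)$, the product rule on $u_N=\bu\cdot\bn$ to pass to $\dot\bn=-\bP(\nabla^T\bu)\bn$, and then $\dot\bP=-\dot\bn\,\bn^T-\bn\,\dot\bn^T$ for \eqref{id1a}. The only (harmless) variation is that you justify the tangentiality of $\nabla^T(\partial_t d)$ on $\Gamma$ via the time-differentiated eikonal identity $|\nabla d|\equiv 1$, whereas the paper obtains the same fact by writing $\partial_t d(x,t)=-u_N(p(x,t),t)$ in the whole neighbourhood and applying the chain rule through the closest point projection; both arguments are sound.
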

\begin{proof}
Let $d(x,t)$ be the signed distance function to $\Gamma(t)$ defined in a neighborhood $U_t$ of $\Gamma(t)$.
Define the normal extension of $\bn$ and $\bH$ to $U_t$ by $\bn^T=\nabla d$, $\bH=\nabla^2 d$ (note that the latter identity shows that $\bH$ is symmetric), and consider the  closest point projection $p(x,t)=x -d(x,t) \bn(x,t)$, $x \in U_t$. We then have
\[
  \frac{\partial d}{\partial t}(x,t) = - u_N\big(p(x,t),t\big), \quad x \in U_t.
\]
Using  the chain rule we get
\[
  \nabla [u_N(p(x,t),t)] = \nabla_\Gamma u_N (p(x,t),t)\big(\bI - d(x,t)\bH\big) \quad x \in U_t.
\]
Take $x \in \Gamma(t)$, using $d(x,t)=0$, $p(x,t)=x$ and $\nabla d =\bn^T$, we obtain
\begin{equation} \label{id2}
 \frac{\partial \bn^T}{\partial t}= \frac{\partial}{\partial t} \nabla d= \nabla\frac{\partial d}{\partial t} = - \nabla_\Gamma u_N, \quad \text{on}~~  \Gamma(t).
\end{equation}
Using this and $\bH \bn=0$ we get
\[
 \dot \bn = \frac{\partial \bn}{\partial t}+ (\nabla \bn) \bu= - \gradG^T u_N +  \bH \bu_T,
\]
which is the first identity in \eqref{id1}.
  From  $\bu_T \cdot \bn=0$  we get $\bn^T \nabla \bu_T=- \bu_T^T \nabla \bn $ and combined with the symmetry of $\bH$ we get
  \begin{equation}\label{aux1}
  \bH \bu_T=- (\nabla^T \bu_T)\bn.
  \end{equation}
  Furthermore, we note that $\nabla(u_N \bn)= \bn \nabla u_N  +u_N \nabla \bn$, hence $\bn^T \nabla(u_N \bn)=\nabla u_N$. Using this, the result in \eqref{aux1} and $\bP\bH=\bH$ we get
\begin{align*}
  - \gradG^T u_N +  \bH \bu_T & = -\bP\big( \nabla^T u_N + (\nabla^T \bu_T) \bn \big) \\ & =-\bP\big(\nabla^T(u_N \bn) \bn  + (\nabla^T \bu_T) \bn \big) = - \bP (\nabla^T \bu) \bn,
\end{align*}
which is the second identity in \eqref{id1}. The result in \eqref{id1a} immediately follows from $\dot \bP= -\dot \bn \bn^T - \bn \dot \bn^T$ and the second identity in \eqref{id1}.
\end{proof}
\smallskip

From \eqref{id1} we see that the vector field $\dot\bn$ is always tangential to $\Gamma(t)$.

\section{Modeling of  material surface flows} \label{sectmodel}
In this section, we assume $\Gamma(t)$ is a \textit{material} surface (fluidic membrane) embedded in $\Bbb{R}^3$ as defined in \cite{GurtinMurdoch75,MurdochCohen79}, with density distribution $\rho(x,t)$.
By $\bu(x,t)$, $x \in \Gamma(t)$,  we denote the smooth velocity field  of the density flow on $\Gamma$, i.e. $\bu(x,t)$ is the velocity of the material point  $x\in\Gamma(t)$. The geometrical evolution of the surface is defined by the normal velocity $u_N$, for $\bu(x,t)=\bu_T+u_N\bn$.  For all $t\in[0,T]$, we assume $\Gamma(t) \subset \Bbb{R}^3$ to be smooth, closed and embedded in an ambient continuum medium, which exerts external (area) forces on the material surface.



Let $\gamma(t) \subset \Gamma(t)$ be a material subdomain.
 For a smooth $f:\Gs\to \mathbb{R}$, we shall make use of  the  transport formula (also known as a Leibniz rule; see, e.g., \cite{DEreview} Theorem 5.1),
\begin{equation} \label{Leibniz}
 \frac{d}{dt} \int_{\gamma(t)} f \, ds= \int_{\gamma(t)}( \dot f + f\divG \bu )\, ds.
\end{equation}
For a smooth tangential field $\bu_T:\Gs\to \mathbb{R}^n$ we need the surface Stokes formula (see, e.g. \cite{GReusken2011}, section~14.1),
\begin{equation} \label{Stokes_f}
\int_{\gamma(t)}\div_\Gamma\bu_T \, ds=  \int_{\partial \gamma(t)}\bu_T\cdot\nu \, d\ell;
\end{equation}
here $\nu=\nu(x,t)$ denotes the normal to $\partial \gamma(t)$ that is tangential to $\Gamma(t)$.

\medskip

\noindent{\bf Inextensibility.} We assume that the surface material is inextensible, i.e.  $\frac{d}{dt}\int_{\gamma(t)} 1 \, ds =0$ must hold.  The  Leibniz rule  yields $\int_{\gamma(t)} \divG \bu \, ds =0$. Since $\gamma(t)$ can be taken arbitrary, we get
\begin{equation} \label{massconser}
\divG \bu=0 \quad \text{on}~~\Gamma(t).
\end{equation}
We recall the notation $\kappa=\mbox{tr}(\bH) = \divG\bn$  for the (doubled) mean curvature. Equation \eqref{massconser} can be rewritten as
\begin{equation} \label{massconser2}
  \divG \bu_T= - u_N \kappa\quad \text{on}~~\Gamma(t).
\end{equation}
\medskip
\noindent{\bf Mass conservation.}  From $\frac{d}{dt}\int_{\gamma(t)} \rho(x,t) \, ds =0$, \eqref{Leibniz} and \eqref{massconser} we obtain $\dot \rho =0$. In particular, if $\rho|_{t=0}={\rm const}$, then $\rho={\rm const}$ for all $t>0$.
\\[1ex]
{\bf Momentum conservation.} The conservation of linear momentum for $\gamma(t)$ reads:
\begin{equation} \label{momconser}
 \frac{d}{dt} \int_{\gamma(t)} \rho \bu \, ds= \int_{\partial \gamma(t)} \mathbf{f}_{\nu} \, d\ell + \int_{\gamma(t)} \bb \, ds, 
\end{equation}
where $\mathbf{f}_{\nu}$ are the contact forces on $\partial \gamma(t)$, $\bb=\bb(x,t)$ are the area forces on $\gamma(t)$, which include both tangential and normal forces, for example, normal stresses induced by an ambient medium and elastic bending forces.
\\[1ex]
{\bf Surface diffusion.} For the modeling of the contact forces we use results from  \cite{GurtinMurdoch75,MurdochCohen79}. In \cite{GurtinMurdoch75}, Theorems  5.1 and 5.2, the ``Cauchy-relation'' $\mathbf{f}_\nu= \bT\nu$, with a symmetric tangential stress tensor $\bT$ is derived. We denote this \emph{surface stress tensor} by $\bsigma_\Gamma$, which has the properties $\bsigma_\Gamma=\bsigma_\Gamma^T$ and $\bsigma_\Gamma= \bP\bsigma_\Gamma \bP$.  In \cite{GurtinMurdoch75} the following (infinitesimal) \emph{surface rate-of-strain tensor} is derived:
\begin{equation} \label{strain}
 E_s(\bu):= \frac12 \bP (\nabla \bu +\nabla^T \bu)\bP = \frac12(\nabla_\Gamma \bu + \nabla_\Gamma^T \bu).
 \end{equation}
One needs a constitutive law which relates $\bsigma_\Gamma$ to this surface strain tensor. We consider a ``Newtonian surface fluid'', i.e., a constitutive law of the form
\[
  \bsigma_\Gamma= - \pi \bP + C( \gradG \bu),
\]
with a scalar function $\pi$, surface pressure, and a linear mapping $C$. Assuming  \textit{isotropy} and requiring an \textit{independence of the frame of reference} leads to the so-called \emph{Boussinesq--Scriven} surface stress tensor, which can be found at several places in the literature, e.g., \cite{aris2012vectors,BothePruess2010,GurtinMurdoch75,scriven1960dynamics} :
\begin{equation*} \label{BousS}
 \bsigma_\Gamma= - \pi \bP + (\lambda- \mu) (\divG \bu)\bP + 2 \mu E_s(\bu),
\end{equation*}
with an interface dilatational viscosity  $\lambda$ and interface shear viscosity $\mu >0$. We assume $\lambda$ and $\mu$ constant. Due to inextensibility the dilatational term vanishes, and we get
\begin{equation} \label{BousS1} \bsigma_\Gamma= - \pi \bP  + 2\mu E_s(\bu).
 \end{equation}
\smallskip

 Using the Stokes formula~\eqref{Stokes_f} applied row-wise to $\bsigma_\Gamma$ and the relation
$
  \divG (\pi \bP)=\nabla_\Gamma^T \pi - \pi \kappa \bn,
$
we obtain the following linear momentum balance for $\gamma(t)$:
\begin{equation*} \label{conservation}
  \frac{d}{dt} \int_{\gamma(t)} \rho \bu \, ds = \int_{\gamma(t)}( - \nabla_\Gamma^T \pi + 2 \mu \divG (E_s(\bu))  + \bb + \pi \kappa\bn  )\, ds.
\end{equation*}
For the left hand-side of this equation, the Leibniz rule~\eqref{Leibniz} gives
\begin{equation*}
 \frac{d}{dt} \int_{\gamma(t)} \rho \bu \, ds= \int_{\gamma(t)}( \dot \rho \bu + \rho \dot \bu + \rho \bu \divG \bu )\, ds.
\end{equation*}
The inextensibility and mass conservation yield the simplification
$
\dot \rho \bu + \rho \dot \bu + \rho \bu \divG \bu = \rho \dot \bu.
$
Hence, we finally obtain the \emph{surface Navier-Stokes equations}  for inextensible viscous material surfaces:
\begin{equation} \label{momentum}
\left\{
\begin{split}
 \rho \dot \bu & =- \nabla_\Gamma^T \pi + 2\mu \divG (E_s(\bu))  + \bb +  \pi \kappa\bn, \\
 \divG \bu  & =0.
\end{split}\right.
\end{equation}
Together with the equations $\dot{\rho}=0$ and $V_\Gamma=\bu\cdot \bn$, where $V_\Gamma$ is the normal velocity of $\Gamma$, and suitable initial conditions  this forms a closed system of six equations for six unknowns $\bu$, $p$, $\rho$, and $V_\Gamma$.

Clearly, the area forces $\bb$ coming from the adjacent inner and outer media are critical for the dynamics of the  material surface. For the example of an ideal bulk fluid, one may assume normal stresses due to the pressure drop between inner and outer phases, $\bb=\bn(p^{int}-p^{ext})$, where $p^{int}-p^{ext}$ may depend on the surface configuration, e.g., its interior volume. In an  equilibrium with $\bu=0$ this simplifies to the balance of the internal pressure and surface tension forces according to Laplace's law.  Such a balance will be more complex if there is only a shape equilibrium, i.e., $u_N=0$, but $\bu_T \neq 0$, cf. \eqref{reaction} below.  The area forces $\bb$ may also include  forces depending on the shape of the surface, such as those due to an  elastic bending energy (Willmore energy), cf. for example, \cite{Bonito2010,canham1970minimum,helfrich1973elastic}. These forces depend on geometric invariants and material parameters. Therefore $\bb$ may (implicitly) depend on $\bu$.

 Using a completely different approach the model \eqref{momentum} is also derived in \cite{Gigaetal}  and it is also found in \cite{BothePruess2010} (in this reference $\pi$ is treated as a constant parameter related to surface tension). Some variants of \eqref{momentum}  are used in  \cite{barrett2014stable,nitschke2012finite,lengeler2015stokes}, {  cf. the further discussion in seciton~\ref{s_other} below}. In \cite{barrett2014stable,lengeler2015stokes} the interface viscous fluid flow is  coupled with outer bulk fluids, and  for the velocity of the material surface  $\bu=:\bu_\Gamma$ one introduces the condition $\bu_\Gamma=(\bu^{bulk})_{|\Gamma}$, which means that both the normal and tangential components of surface and bulk velocities coincide. The condition for the tangential component corresponds to a ``no-slip'' condition at the interface. The condition  $\
bu_\Gamma=(\bu^{bulk})_{|\Gamma}$, allows to eliminate $\bu_\Gamma$ (using a momentum balance in a small bulk volume element
that contains the interface) and to deal with the surface forces (both  viscous and $\bb$) through a localized force term in the bulk Navier-Stokes equation. The surface pressure $\pi$ remains and is used to satisfy the inextensibility condition $\divG \bu=0$.    In \cite{nitschke2012finite} a special case of \eqref{momentum}, namely that of a stationary surface is considered, cf. \eqref{NSstat} below. 

In certain cases, for example, when the inertia of the surface material dominates over the viscous forces in the bulk, it may be more appropriate to relax the no-slip condition $\bu_\Gamma=(\bu^{bulk})_{|\Gamma}$ and assume the coupling with the ambient medium only through the area forces $\bb$.
In such a situation the surface flow can not be ``eliminated'' and the system \eqref{momentum} becomes an important part of the surface--bulk fluid dynamics model. In Section~\ref{splitting} below we take a closer look at the normal and tangential dynamics defined  by \eqref{momentum}.
As far as we know, in the  literature the surface Navier-Stokes equations \eqref{momentum} on \emph{evolving} surfaces, without coupling to bulk fluids, have only been considered  in the recent paper \cite{Gigaetal}. Results of numerical simulations of such a model for  a \emph{stationary} surface, $u_N=0$, are presented in  \cite{nitschke2012finite}. This special case $u_N=0$ will be further addressed in section~\ref{splitting}.

\subsection{Directional splitting of the surface Navier-Stokes equations} \label{splitting}
Given the force term $\bb$, the system \eqref{momentum} determines $\bu=u_N \bn +\bu_T$ (and thus the evolution of $\Gamma(t)$), and there is a strong coupling between $u_N$ and $\bu_T$. There is, however, a clear distinction between the normal direction and the tangential direction (see, e.g., the difference in the viscous forces in normal and tangential direction in \eqref{Res1} and \eqref{Res1a}). In particular, the geometric evolution of $\Gamma(t)$  is completely determined by $u_N$ (which may depend on $\bu_T$). Therefore, it is of interest to split the equation \eqref{momentum} for $\bu$ into two coupled equations for $u_N$ and $\bu_T$.
We project the momentum  equation \eqref{momentum} onto the tangential space and normal space, respectively.

First, we compute with the help of identities \eqref{id1}--\eqref{id1a}
\begin{equation}\label{aux2}
  \bP\dot \bu  =\dot{(\bP \bu)} -  \dot \bP \bu= \dot \bu_T - \dot \bP \bu = \dot \bu_T + (\dot \bn \cdot\bu_T)\bn+ u_N \dot \bn.
   \end{equation}
   Note that the last two terms on the right hand-side are orthogonal, since $\bn \cdot \dot \bn=0$.
Applying $\bP$ to both sides of \eqref{aux2} and using $\bP^2=\bP$ and $\bP\dot \bn=\dot \bn$, we also get
\begin{equation}\label{aux3}
  \bP\dot \bu  =  \partial^\bullet_\Gamma\bu_T +  u_N \dot \bn,
   \end{equation}
where $\partial^\bullet_\Gamma\bu_T: =  \bP\dot \bu_T$ can be interpreted as the covariant material derivative.
We also have
\begin{equation*}
 \bn \cdot \dot \bu  = \dot u_N - \dot \bn\cdot \bu = \dot u_N - \dot \bn \cdot \bu_T.
\end{equation*}
We thus get the following directional splitting of the equations in \eqref{momentum}:
\begin{equation}\label{NS}
\left\{\begin{aligned}
 \rho \dot \bu_T &= -\nabla_\Gamma^T \pi + 2\mu\bP \divG E_s(\bu) +  \bb_T - \rho\big((\dot \bn\cdot \bu_T)\bn +u_N \dot \bn \big), \\
 \rho \dot u_N & = 2 \mu \bn\cdot \divG E_s(\bu) +\pi \kappa + b_N + \rho \dot \bn\cdot \bu_T,\\
 \divG \bu_T   &= - u_N \kappa.
\end{aligned}\right.
\end{equation}
The material derivative of the tangential vector field on the left-hand side of the first equation in \eqref{NS}, in general,
is not tangential to $\Gamma(t)$. Its normal component is balanced  by the term $\rho(\dot \bn\cdot \bu_T)\bn$.
One can also write this equation only in tangential terms employing the identity \eqref{aux3} instead of \eqref{aux2}.
This results in the tangential momentum equation
\begin{equation} \label{tangent}
\rho \partial^\bullet_\Gamma\bu_T = -\nabla_\Gamma^T \pi + 2\mu\bP \divG E_s(\bu) +  \bb_T - \rho u_N \dot \bn.
\end{equation}
These equations can be further rewritten using
\begin{equation} \label{idfund}
E_s(\bu)=E_s(\bu_T) + u_N \bH.
\end{equation}
From this, the definition of the Bochner Laplacian and the relations in Lemma~\ref{LemGauss} we get
\begin{align*}
 \bP \divG E_s(\bu) & =\bP \divG E_s(\bu_T) + \bP \divG (u_N \bH) \\
  & = \frac12 \bP  \divG (\gradG \bu_T)+ \frac12 \bP  \divG (\gradG^T \bu_T) + u_N \bP\divG (\bH) + \bH \gradG^T u_N \\
& = \frac12 \Delta_\Gamma \bu_T +\frac12 K \bu_T + \frac12 \gradG^T \divG \bu_T  + u_N \gradG^T \kappa +\bH \gradG^T u_N.
\end{align*}
We would like to have a representation of $\bP \divG E_s(\bu)$ that does not include derivatives of $\bH$ or its invariants.
To this end, we note that  $\divG \bu_T   = - u_N \kappa$ implies
\[
 \gradG^T \divG \bu_T  + u_N \gradG^T \kappa= - \gradG^T(u_N \kappa)+ u_N \gradG^T \kappa= - \kappa \gradG^T u_N.
\]
Combining this we get
\begin{equation} \label{aux450}
 2 \mu \bP \divG E_s(\bu) = \mu\big(\Delta_\Gamma \bu_T+   K \bu_T- \gradG^T(\divG \bu_T) - 2(\kappa \bP-\bH)\gradG^T u_N\big).
\end{equation}
Note that $\kappa \bP-\bH$ has the same eigenvalues and eigenvectors as $\bH$, which follows from the relation $\kappa \bP-\bH= K \bH^\dagger$, cf.~\eqref{iddd}.
Thus, using \eqref{id1a}, \eqref{tangent}, and \eqref{aux450} we can rewrite \eqref{NS} as
 \begin{equation}\label{NSalt}
\left\{\begin{aligned}
 \rho \partial^\bullet_\Gamma\bu_T &= -\nabla_\Gamma^T \pi + \mu\big(\Delta_\Gamma\bu_T + K\bu_T - \gradG^T(\divG \bu_T) - 2(\kappa \bP-\bH)\gradG^T u_N\big) \\ & \quad + \bb_T - \rho u_N \dot \bn \\
 \rho \dot u_N & = - 2\mu (\tr(\bH \gradG \bu_T)+u_N \tr(\bH^2)) +\pi \kappa + b_N + \rho \dot \bn\cdot \bu_T\\
 \divG \bu_T   &= - u_N \kappa.
\end{aligned}\right.
\end{equation}
It is interesting to note that the first equation in \eqref{NSalt} is of (quasi-)parabolic type, while the equation for the evolution of the normal velocity involves only first order derivatives.
Furthermore, $\dot \bn$ can be expressed in terms of $\bu_T$ and $u_N$,
\[
 \dot{\bn}=\bH \bu_T - \gradG^T u_N .
\]
Hence the derivatives in the terms $\rho u_N \dot \bn$ and $\rho \dot \bn\cdot \bu_T$ on the right-hand side of \eqref{NS} and \eqref{NSalt} are only \emph{tangential} ones (no $\frac{\partial}{\partial t}$ involved). From this we conclude that given  $\bu(\cdot, t)$ for $t < t_{\ast} $ (which determines $\Gamma(t)$, $t <t_\ast $) the second equation in \eqref{NSalt} determines the dynamics of $u_N(\cdot,t)$ at $t=t_\ast$, hence of the surface $\Gamma(t_\ast)$, and the first equation \eqref{NSalt} determines the dynamics of $\bu_T(\cdot,t)$ at $t=t_\ast$.

\begin{rem}\label{remArroyo} \rm
We already noted that \eqref{momentum} or \eqref{NSalt} together with  $\dot{\rho}=0$ and the equation for the surface evolution, $V_\Gamma=\bu\cdot \bn$, form a closed system. This is different to the situation in \cite{Gigaetal}, where the evolution of $\Gamma(t)$ is given \textit{a priori}, resulting in an overdetermined system (for the total velocity $\bu$), which is then projected to obtain a closed system for the tangential velocity $\bu_T$, cf. the discussion in section 1 of \cite{Gigaetal}.
 In our setting  an \textit{a priori} known evolution of the surface would imply that $u_N$ is given. In this case, the first and the third equations in \eqref{NS} or \eqref{NSalt} define a closed system for $\bu_T$ and $\pi$. We note, however, that the continuum mechanics corresponding to such a closed system is less clear to us, since the fundamental momentum balance \eqref{momconser} used to derive the equations does not assume any \textit{a priori} constraint on $u_N$.

 The model \eqref{momentum}, or equivalently the one in \eqref{NSalt}, differs from the fluid model on evolving surfaces  derived in \cite{arroyo2009}. In the latter a tangential momentum equation (eq. (3) in \cite{arroyo2009}) is introduced, which is similar to, but different from, the first equation in \eqref{NS}. The model in \cite{arroyo2009} is based on a ``conservation of linear momentum \emph{tangentially} to the surface'', which is not precisely specified\footnote{Footnote added in proofs of the accepted paper: The controversy was recently addressed in  Reuther, S. \&  Voigt, A., Erratum: The Interplay of Curvature and Vortices in Flow on Curved Surfaces. \textit{Multiscale Modeling \& Simulation} 16 (2018), 1448--1453.}. Our model is derived based on a conservation of total momentum (i.e. for $\bu$, not for $\bu_T$) as in \eqref{momconser}.  The ``tangential'' equation (1.2)  in the paper~\cite{Gigaetal} is the same as the one obtained by applying the projection $\bP$ to the first equation in \eqref{momentum}. Above it is shown that this projected equation equals \eqref{tangent} and also the first equations in \eqref{NS}, \eqref{NSalt}.
\end{rem}

We next discuss two special cases.

Firstly, assume that the system evolves to an equilibrium with $\Gamma(t)$ stationary, i.e.,  $u_N=0$. Then the equations in \eqref{NS} reduce to the following surface incompressible Navier-Stokes equations for the tangential velocity $\bu_T$ on a \emph{stationary} surface $\Gamma$:
\begin{equation}\label{NSstat}
\left\{
\begin{aligned}
\rho \left( \frac{\partial \bu_T}{\partial t}+(\bu_T\cdot\nabla_\Gamma)\bu_T \right)&= -\nabla_\Gamma^T \pi + 2\mu\bP \divG E_s(\bu_T) + \bb_T  \\
 \divG \bu_T   &= 0 .
\end{aligned}\right.
\end{equation}
For the derivation of the first equation in \eqref{NSstat} we used the tangential momentum equation \eqref{tangent}, $u_N=0$,  and
\[ \begin{split}
  \partial^\bullet_\Gamma\bu_T & = \bP(\frac{\partial \bu_T}{\partial t} + (\nabla \bu_T) \bu)=\bP(\frac{\partial \bu_T}{\partial t} + (\nabla \bu_T) \bu_T) \\ & = \frac{\partial \bu_T}{\partial t} + (\nabla_\Gamma \bu_T) \bu_T =: \frac{\partial \bu_T}{\partial t}+(\bu_T\cdot\nabla_\Gamma)\bu_T,
\end{split} \]
where for the third identity we used
$\frac{\partial\bP}{\partial t}=0$ on geometrically steady surfaces and  $(\nabla_\Gamma \bu_T) \bu_T=\bP(\nabla \bu_T)\bP \bu_T=\bP(\nabla \bu_T)\bu_T$ by \eqref{defdeli2}.
The second equation in \eqref{NS}, or \eqref{NSalt}, reduces to
\begin{equation} \label{reaction}
 b_N= 2\mu \tr (\bH \gradG \bu_T) - \pi  \kappa -\rho  \bu_T\cdot \bH \bu_T,
\end{equation}
which describes the reaction force $b_N$ of the surface flow $\bu_T$. If there is no surface flow, i.e.,  $\bu_T=0$, this reaction force is the usual surface tension $\pi \kappa$, with a surface tension coefficient $\pi$.

 Again, if the \emph{stationary} surface $\Gamma$ is \textit{a priori} given as a domain where the equations are posed, then \eqref{NSstat} (with a suitable initial condition) forms a complete system.
Equation \eqref{reaction} applies to a \textit{material} surface and can be seen as a necessary condition for the area normal force $b_N$ to sustain the geometrical equilibrium of the surface.

In the second case, $\Gamma(0)$  is taken equal to the plane $z=0$ in $\Bbb{R}^3$. This is not a closed surface, but the derivation above also applies to connected surfaces without boundary, which may be unbounded. We consider $b_N=0$, $u_N(0)=1$. Only easily checks that independent of $\bu_T$ the second equation in \eqref{NS} is satisfied for $u_N(\cdot,t)=1$, $\dot{\bn}=0$, $\bH=0$ for all $t \geq 0$. Hence,  the evolving surface is given by the plane $\Gamma(t)=\{\, (x,y,z)=(x,y,t)\,\}$. The first and the third equations in \eqref{NS} reduce to the standard planar Navier-Stokes equations for $\bu_T$.

\subsection{Other formulations of  the surface Navier--Stokes equations} \label{s_other}
{  Incompressible Navier-Stokes equations on \emph{stationary} manifolds are well-known in the  literature, e.g.,~\cite{ebin1970groups,Temam88,cao1999navier,taylor1992analysis,mitrea2001navier}. Only very few papers treat incompressible Navier-Stokes equations on \emph{evolving} surfaces, cf.~\cite{arroyo2009,Gigaetal}.  Comparing the model \eqref{momentum} for the general case of an evolving surface (or its equivalent reformulations treated above) or the model~\eqref{NSstat}  for the case of a stationary surface to the models treated in the literature we observe the following. As noted above (Remark~\ref{remArroyo}), the general model \eqref{momentum} is the same as the one derived in \cite{Gigaetal}, but differs from the one given in \cite{arroyo2009}. Differences with other models presented in the literature can result from a different treatment of surface diffusion or of surface pressure. Below we briefly address these two modeling topics.}\\[1ex]
{\bf Surface diffusion.}
Our modeling of diffusion is based on the constitutive law \eqref{BousS1}, leading to the second order term $\divG (E_s(\bu))$ in \eqref{momentum}, or $\bP \divG (E_s(\bu_T))$ for the stationary case in \eqref{NSstat}.  Certain other Navier-Stokes equations in the literature are formally obtained by substituting Cartesian differential operators by their geometric counterparts~\cite{Temam88,cao1999navier} rather than from  first mechanical principles.
This leads to  formulations of surface Navier-Stokes equations which are not necessarily equivalent, due to a difference in the diffusion terms.
The diagram below and identities ~\eqref{LaplacianALL} illustrate some ``correspondences'' between Cartesian and surface operators, where for the surface velocities we assume $u_N=0$, i.e., $\bu=\bu_T$,
\[
\begin{array}{rcccccc}
\mathbb{R}^{n-1}\,: &-\div(\nabla\bu+\nabla^T\bu)& \overset{\div\bu=0}{=}  & -\Delta\bu& {=}  & (\mbox{rot}^T\mbox{rot}-\nabla\div)\bu\\
 &\wr& &\wr& &\wr\\
 \text{Manifold}\,: & \underbrace{-\bP \divG (2 E_s(\bu))}& \overset{\divG\bu=0}{\neq}  &  \underbrace{-\Delta_\Gamma\bu}& {\neq}  & \underbrace{-\Delta_\Gamma^H\bu}\\
 &\text{surface}& &\text{Bochner }& &\text{Hodge}\\
 &\text{diffusion}& &\text{Laplacian}& &\text{Laplacian}\\
\end{array}
\]
Moreover, for a surface in $\mathbb{R}^3$ we have, cf. \eqref{Laplacian}, \eqref{Res2} and the Weitzenb\"{o}ck identity~\cite{rosenberg1997laplacian}, the following equalities for $\bu$ such that $\divG\bu=0$:
\begin{equation}\label{LaplacianALL}
-\bP \divG (2E_s(\bu))=-\Delta_\Gamma\bu-K\bu=-\Delta_\Gamma^H\bu-2K\bu.
\end{equation}
Using this we see that the Navier-Stokes system \eqref{NSstat}  coincides with the Navier-Stokes equations (on a stationary surface) considered in \cite{taylor1992analysis,mitrea2001navier} (see \cite{taylor1992analysis} section 6).
Formulations of the surface momentum equations employing the identity \[-\bP \divG (2E_s(\bu))=-\Delta_\Gamma^H\bu-2K\bu,\]
with the  Hodge--de Rham Laplacian $-\Delta_\Gamma^H$ can be convenient for rewriting the problem in  surface stream-function -- vorticity variables, see, e.g., \cite{nitschke2012finite}. However, such a formulation is less convenient for the analysis of  well-posedness, since the Gauss curvature $K$ in general does not have a fixed sign.
Moreover,  in a numerical approximation of \eqref{NSstat} one would have to approximate the Gauss curvature $K$ based on a ``discrete'' (e.g., piecewise planar) surface  approximation, which is known to be a delicate numerical issue.

\noindent
{  {\bf Surface pressure.}  We discuss the derivation of the pressure terms $\nabla_\Gamma^T \pi$ and $\pi \kappa\bn$ in  \eqref{momentum}. In most other papers on surface Navier-Stokes equations a pressure term of the form  $\nabla_\Gamma^T \pi$ appears. In many papers, e.g., \cite{arnol2013mathematical,ebin1970groups,Temam88}, the term  $\pi \kappa\bn$ does not appear. We comment on this.
  The term $\pi \kappa\bn$ is part of the tension force generated by the fluidic surface and is due to the \textit{material} nature of the surface itself. The constitutive law \eqref{BousS1} and the momentum conservation yield  the term $\divG (\pi \bP)= \gradG^T\pi- \pi \kappa \bn$, which contains both tangential ($\gradG^T\pi$) and normal ($\pi \kappa \bn$) forces.

While the  present paper introduces surface pressure via the surface stress tensor $\bsigma_\Gamma$, as is common in continuum mechanics,   one can use a Hodge type decomposition to introduce $\pi$  as a Lagrange  multiplier corresponding to the divergence constraint, see, e.g.~\cite{ebin1970groups}. Also in this setting one obtains the term $\pi \kappa \bn$ if one considers general (\emph{not} necessarily tangential) vector fields on the surface.  This can be seen as follows. The following result can be  proved (see, Lemma~2.7 in \cite{Gigaetal}) for a smooth   surface $\Gamma$: For $\bu\in L^2(\Gamma)^3$
\[
\int_\Gamma \bu\cdot\bphi\,ds=0\quad\forall\bphi\in C^1_0(\Gamma)^3,~\div_\Gamma\bphi=0\qquad\text{iff}\qquad\bu=\nabla_\Gamma^T\pi-\pi\kappa\bn,
\]
for a $\pi\in H^1(\Gamma)$. For closed surfaces $C^1_0(\Gamma)^3$ can be replaced by $C^1(\Gamma)^3$. Hence,
the $L^2$-orthogonal complement to the space of smooth solenoidal vector functions on $\Gamma$ leads to  pressure terms exactly the same as in \eqref{momentum}.

If on the other hand, one only considers \emph{tangential} vector fields (which is natural for stationary surfaces) the derivation of the above result in  \cite{Gigaetal} also yields the following, which is a surface variant of a well-known Helmholtz type result (cf. Theorem 2.9 in \cite{GR}): For $\bu\in L^2(\Gamma)^3$ such that $\bu =\bu_T$ holds, we have
\[
\int_\Gamma \bu_T \cdot\bphi_T \,ds=0\quad\forall\bphi\in C^1_0(\Gamma)^3,~\div_\Gamma\bphi_T=0\qquad\text{iff}\qquad\bu_T=\nabla_\Gamma^T\pi,
\]
for a $\pi\in H^1(\Gamma)$. Hence, if the pressure is considered as a Lagrange multiplier (for the divergence free constraint)  the term  $\pi \kappa \bn$ occurs if  nontangential
velocity fields are present (as in the case of evolving surfaces).}

\subsection{Surface Stokes problem} The mathematical analysis of well-posedness of a problem as in \eqref{NS} (or \eqref{momentum}) is a largely open question. In this paper, we study the well-posedness of a relatively  simple special case, namely a \emph{Stokes} problem on a \emph{stationary} surface. We assume that $u_N=0$ (stationary surface) and assume that the viscous surface forces dominate and thus it is reasonable to skip  the nonlinear $\bu_T \cdot \gradG \bu_T$ term in the  material derivative. Furthermore, we first restrict to the equilibrium flow problem, i.e., $\frac{\partial \bu_T}{\partial t}=0$.
We thus obtain  the \emph{stationary surface Stokes} problem
 \begin{equation} \label{Stokes1} \begin{split}
  - 2\mu \bP \divG (E_s(\bu_T)) +\nabla_\Gamma^T \pi &=  \bb_T,\\
  \divG \bu_T & =0.
 \end{split}
 \end{equation}
 One readily observes that all constant pressure fields and tangentially rigid surface fluid motions,  i.e., motions satisfying  $E_s(\bv_T)=0$,
 are in the kernel of the differential operator on the left-hand side of the equation. Integration by parts,
 immediately implies  the necessary
 consistency condition for the right-hand side of \eqref{Stokes1},
 \begin{equation}\label{constr}
 \int_\Gamma \bb_T \bv_T\,ds=0\quad\text{for all}~~\bv_T~~ \text{s.t.}~~ E_s(\bv_T)=0.
 \end{equation}
In the following sections we analyze different weak formulations of this Stokes problem.

The subspace of all tangential vector fields $\bv_T$ on $\Gamma$ satisfying  $E_s(\bv_T)=0$ plays an important role in the
analysis of the surface Stokes problem. In the literature, such fields are known as \textit{Killing vector fields}, see, e.g., \cite{sakai1996riemannian}. For a smooth two-dimensional Riemannian manifold, Killing vector fields form a Lie algebra, which dimension
is at most 3. For a compact smooth surface  $\Gamma$ embedded in $\mathbb{R}^3$ the dimension of the algebra is 3 iff $\Gamma$ is isometric to a sphere.

 \section{A well-posed  variational surface Stokes equation}\label{s_Stokes}
 Assume that $\Gamma$ is a closed sufficiently smooth manifold.
 We introduce the space $V:=H^1(\Gamma)^n$, with norm
 \begin{equation} \label{H1norm}
  \|\bu\|_{1}^2:=\int_{\Gamma}\|\bu(s)\|_2^2 + \|\nabla\bu^e (s)\|_2^2\,ds,
 \end{equation}
where $\|\cdot\|_2$ denotes the vector and matrix $2$-norm. Here $\bu^e$ denotes the constant extension along normals of $\bu:\Gamma \to \Bbb{R}^n$. We have $\nabla\bu^e= \nabla (\bu\circ p)=\nabla\bu^e\bP$, where $p$ is the closest point projection onto $\Gamma$, hence only {tangential} derivatives are included in this $H^1$-norm.
 We define the spaces
\begin{equation}   \label{defVT}
 V_T:= \{\, \bu \in V~|~ \bu\cdot \bn =0\,\},\quad E:= \{\, \bu \in V_T~|~ E_s(\bu)=0\,\}.
\end{equation}
Note that $E$ is a closed subspace of $V_T$ and $\mbox{dim}(E)\le 3$.
  We use an orthogonal decomposition $V_T=V_T^0 \oplus E$ with the Hilbert space $V_T^0 = E^{\perp_{\|\cdot\|_{1}}}$ (hence $V_T^0 \sim V_T/E$). We also need the factor space
 $L_0^2(\Gamma):=\{\, p \in L^2(\Gamma)~|~ \int_\Gamma p\,dx=0\,\}\sim L^2(\Gamma)/\mathbb{R}$.
We introduce the bilinear forms
\begin{align}
a(\bu,\bv)& := 2\mu \int_\Gamma E_s(\bu):E_s(\bv) \, ds= 2\mu \int_\Gamma {\rm tr}\big(E_s(\bu) E_s(\bv)\big) \, ds, \quad \bu,\bv \in V, \label{defblfa} \\
b(\bu,p) &:= - \int_\Gamma p\,\divG \bu \, ds,  \quad \bu \in V, ~p \in L^2(\Gamma). \label{defblfb}
\end{align}
We take $f \in  V'$, such that $f(\bv_T)=0$ for all $\bv_T\in E$, and consider the following variational Stokes problem: determine $(\bu_T,p) \in V_T^0 \times L_0^2(\Gamma)$ such that
 \begin{equation} \label{Stokesweak1} \begin{split}
           a(\bu_T,\bv_T) +b(\bv_T,p) &=f(\bv_T) \quad \text{for all}~~\bv_T \in V_T, \\
           b(\bu_T,q) & = 0 \qquad \text{for all}~~q \in L^2(\Gamma).
                                      \end{split}
 \end{equation}
This weak formulation is consistent to the strong one in \eqref{Stokes1} for $f(\bv_T)=(\mathbf{b}_T,\bv_T)_0$. Note that $ E_s({\bv_T})=0$ implies $\tr(\gradG \bv_T)=0$ and thus $\divG {\bv_T}=0$, hence, $b(\bv_T,p)=0$ for all $\bv_T \in E$. From this it follows that the first equation in  \eqref{Stokesweak1} is always satisfied for all $\bv_T \in E$, hence it is not relevant whether we use $V_T$ or $V_T^0$ as space of test functions.
For the analysis of well-posedness a surface  Korn's inequality is a crucial ingredient. Although there are results in the literature on Korn's type equalities on surfaces, e.g. \cite{CiarletElasticity,Mardare}, these are related to surface models of thin shells, such as Koiter's model, which contain derivatives in the direction of the normal displacement. In the literature we did not find a result of the type given in \eqref{korn} below, and therefore we include a proof.
\begin{lem} \label{Kornlemma}
Assume $\Gamma$ is $C^2$ smooth {   and compact}. There exists $c_K >0$ such that
 \begin{equation} \label{korn}
 \|E_s(\bu)\|_{L^2(\Gamma)} \geq c_K \|\bu\|_{1} \quad \text{for all}~~\bu \in V_T^0.
 \end{equation}
\end{lem}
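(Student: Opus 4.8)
The plan is to establish the surface Korn inequality \eqref{korn} by a compactness (Lions–Peetre type) argument, reducing it to an "infinitesimal" Korn inequality plus the fact that the kernel of $E_s$ has been factored out. First I would prove the auxiliary ``second Korn inequality'' on the full space $V_T$, namely that there is a constant $C>0$ with
\[
  \|\bu\|_1^2 \le C\big(\|\bu\|_{L^2(\Gamma)}^2 + \|E_s(\bu)\|_{L^2(\Gamma)}^2\big)\qquad\text{for all }\bu\in V_T.
\]
To get this I would work locally in charts: cover the compact surface $\Gamma$ by finitely many coordinate patches, use a partition of unity, and in each patch write $E_s(\bu)=\tfrac12(\gradG\bu+\gradG^T\bu)$ in local coordinates. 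The key algebraic point is that $\gradG \bu=\bP(\nabla\bu)\bP$, so that the symmetric part $E_s(\bu)$ controls the symmetrized tangential Jacobian modulo lower-order terms involving $\bP$ and its derivatives (the Weingarten map $\bH$); since $\Gamma$ is $C^2$, these coefficients are bounded. Then the classical Euclidean Korn inequality in the parameter domain, applied componentwise, yields control of all first tangential derivatives by $E_s(\bu)$ plus an $L^2(\Gamma)$-term; summing over the partition of unity and absorbing the commutator/curvature terms (which are of order zero in $\bu$ and hence bounded by $\|\bu\|_{L^2}$) gives the displayed estimate.

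Next I would upgrade this to \eqref{korn} on $V_T^0$ by the standard Peetre–Tartar / contradiction argument. Suppose \eqref{korn} fails; then there is a sequence $\bu_k\in V_T^0$ with $\|\bu_k\|_1=1$ and $\|E_s(\bu_k)\|_{L^2(\Gamma)}\to 0$. By reflexivity of $V=H^1(\Gamma)^n$ and the compact embedding $H^1(\Gamma)\hookrightarrow L^2(\Gamma)$ (here compactness of $\Gamma$ is used), a subsequence converges weakly in $V$ and strongly in $L^2(\Gamma)^n$ to some $\bu\in V_T$; weak lower semicontinuity gives $E_s(\bu)=0$, so $\bu\in E$. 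Feeding the sequence $\bu_k-\bu$ into the second Korn inequality above, and using $\|E_s(\bu_k)\|_{L^2}\to0$, $\|E_s(\bu)\|_{L^2}=0$, and $\|\bu_k-\bu\|_{L^2}\to0$, we conclude $\|\bu_k-\bu\|_1\to0$, so in fact $\bu_k\to\bu$ strongly in $V$ and $\|\bu\|_1=1$. But $V_T^0$ is closed in $V$, so $\bu\in V_T^0$; since also $\bu\in E=(V_T^0)^{\perp_{\|\cdot\|_1}}\cap V_T$ within $V_T=V_T^0\oplus E$, we get $\bu\in V_T^0\cap E=\{0\}$, contradicting $\|\bu\|_1=1$. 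Hence the constant $c_K>0$ exists.

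I expect the main obstacle to be the local (chart-based) second Korn inequality: one must carefully track how $E_s(\bu)$, defined intrinsically via the projector $\bP$ in ambient $\R^n$, relates to the symmetrized Jacobian of the coordinate representation of $\bu$, and verify that the "error" terms are genuinely lower order (bounded coefficients times $\bu$ itself, with no first derivatives of $\bu$ uncontrolled by $E_s(\bu)$). The constraint $\bu\cdot\bn=0$ is what makes the normal components behave well under the tangential projector; without it the statement is false (for general non-tangential $\bv$ one has $E_s(\bv)=E_s(\bv_T)+v_N\bH$ by \eqref{idfund}, so the normal part is only controlled in $L^2$, not $H^1$). An alternative to the chart argument, which I would mention, is to invoke a known "generalized Korn inequality on Riemannian manifolds" and then translate between the covariant derivative and the Cartesian-tangential operators used here via the identities in Section~\ref{s_prelim}; but since the paper emphasizes elementary tangential calculus, the self-contained chart proof is the natural route.
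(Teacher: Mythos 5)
Your proposal is correct and follows essentially the same route as the paper: a chart-based Korn inequality with lower-order term on $V_T$ (relating $E_s(\bu)$ to the symmetrized Jacobian of the pulled-back field, with curvature terms absorbed into $\|\bu\|_{L^2}$), followed by a compactness step on $V_T^0$. The only cosmetic difference is that you carry out the compactness/contradiction argument by hand, whereas the paper invokes the Peetre--Tartar lemma, which encapsulates exactly that argument.
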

\begin{proof}
Let $\bu= \bu_T \in V_T^0$ be given. Throughout this proof, the extension $\bu^e$ is also denoted by $\bu$.
Since $\nabla\bu^e=\nabla\bu$ includes only tangential derivatives we introduce the notation
\[ \nabla_P\bu:=(\nabla\bu)\bP= \nabla\bu^e\]
for the tangential derivative. Furthermore,
the symmetric part of the tangential derivative tensor is denoted by $
 \textbf{e}_s(\bu):=\frac12( \nabla_P\bu  + \nabla_P^T \bu)$. Below we derive the following inequality:
\begin{equation} \label{Korn1}
 \|\bu\|_{L^2(\Gamma)}+\|\textbf{e}_s(\bu)\|_{L^2(\Gamma)} \geq c \|\bu\|_{1} \quad \text{for all}~~\bu \in V_T.
 \end{equation}
Recall \eqref{aux1}, $\bH\bu = - (\nabla^T \bu) \bn$. Using this and $\bP=\bI-\bn\bn^T$ we get
$\gradG^T \bu= \bP \nabla^T\bu \bP=  \bP \nabla^T\bu - \bP (\nabla^T\bu) \bn \bn^T = \nabla_P^T\bu + \bH \bu \bn^T$, and thus we get the identity
\[
E_s(\bu)=\textbf{e}_s(\bu)+\frac12\left(\bH\,\bu \bn^T+\bn \bu^T \bH \right).
\]
Since the surface is $C^2$-smooth  this equality implies $\|\textbf{e}_s(\bu)\|_{L^2(\Gamma)}\leq \|E_s(\bu)\|_{L^2(\Gamma)} + c \|\bu\|_{L^2(\Gamma)}$,  and combining this with \eqref{Korn1} yields
 \begin{equation} \label{Korn2}
 \|\bu\|_{L^2(\Gamma)}+\|E_s(\bu)\|_{L^2(\Gamma)} \geq c \|\bu\|_{1} \quad \text{for all}~~\bu \in V_T,
 \end{equation}
with some $c>0$.
We now apply the Petree-Tartar Lemma, e.g. Lemma A.38 in \cite{Ern04} to $E_s \in \mathcal{L}(V_T^0, L^2(\Gamma)^{3 \times 3})$, which is injective, and the compact embedding ${\rm id}: V_T^0 \to L^2(\Gamma)^3$.  Application of this lemma yields the desired result.

It remains to proof the inequality \eqref{Korn1}.  We use a local parametrization of $\Gamma$ and a standard Korn's inequality in Euclidean space.

Let $\omega \subset \Bbb{R}^{n-1}$ be a bounded open connected domain and $\Phi: \omega \to \Gamma$ a local parametrization of $\Gamma$; $\{\bxi_1,\dots,\bxi_{n-1}\}$ denotes the Cartesian basis in  $\Bbb{R}^{n-1}$. Partial derivatives of $\Phi(\xi)=\Phi (\xi_1, \ldots, \xi_{n-1})$ are denoted by $\ba_\alpha(\xi):= \frac{\partial \Phi(\xi)}{\partial \xi_\alpha} \in \Bbb{R}^n$, $\alpha=1, \ldots, n-1$. Below we often skip the argument $\xi \in \omega$. Greek indices always range from $1$ to $n-1$, and roman indices from $1$ to $n$. We furthermore define $\ba_n:=\bn$. The dual basis (or contravariant basis) is given by $\ba^\beta$ such that $\bP \ba^\beta =\ba^\beta$ and  $\ba^\beta \cdot \ba_\alpha=0$  for $\alpha \neq \beta$ and $\ba^\beta \cdot \ba_\beta =1$. Furthermore $\ba^n:=\ba_n$. Note that $\bP \ba_\alpha= \ba_\alpha,~\bP\ba^\alpha=\ba^\alpha$, $\bP\ba_n=\bP\ba^n=0$.
A given vector function $\bu:\, \Gamma \to \Bbb{R}^n$ is pulled back to $\omega$ as follows:
\[
 \vec{\bu}= (\vec{u}_1, \ldots, \vec{u}_{n-1}):\, \omega \to \Bbb{R}^{n-1}, \quad \vec{u}_\alpha:= (\bu \circ \Phi)\cdot \ba_\alpha.
\]
Note that $\bu\circ \Phi= \vec{u}_\alpha \ba^\alpha$ (Einstein summation convention). We also use the standard notation $\vec{u}_{\alpha,\beta}:= \frac{\partial \vec{u}_\alpha}{\partial \xi_\beta}$. Note that $(\ba^\lambda \cdot \ba_\alpha)_{,\beta}=0$ and thus $\ba^\lambda \cdot \ba_{\alpha,\beta}=- \ba^\lambda_{,\beta}\cdot \ba_\alpha$ holds. Using this we get
\begin{align*}
 \vec{u}_{\alpha,\beta} &= \ba_\alpha \cdot \nabla (\bu\circ \Phi) \bxi_\beta +(\bu\circ \Phi)\cdot \ba_{\alpha,\beta}
 =  \ba_\alpha \cdot (\nabla\bu\circ \Phi ) \ba_\beta + (\vec{u}_\lambda \ba^\lambda) \cdot \ba_{\alpha,\beta}\\
& = \ba_\alpha \cdot (\nabla_P\bu\circ \Phi ) \ba_\beta + \vec{u}_\lambda (\ba^\lambda \cdot \ba_{\alpha,\beta}) = \ba_\alpha \cdot (\nabla_P\bu\circ \Phi ) \ba_\beta - \vec{u}_\lambda \ba^{\lambda}_{,\beta} \cdot \ba_\alpha.
 \end{align*}
 Now note that for $\xi \in \omega$ and $x:=\Phi(\xi)$ we have
 \[ \begin{split}
  \nabla_P(\ba^\lambda\circ \Phi^{-1}(x))\ba_\beta(\xi) & = \nabla (\ba^\lambda\circ \Phi^{-1}(x))\ba_\beta(\xi) =
  \nabla\ba^\lambda(\xi)\nabla\Phi^{-1}(x)\ba_\beta(\xi) \\ & = \nabla\ba^\lambda(\xi)\left[\nabla\Phi(\xi)\right]^{-1}\ba_\beta(\xi) =
   \nabla\ba^\lambda(\xi) \bxi_\beta =\frac{\partial \ba^\lambda (\xi)}{\partial \xi_\beta} = \ba^{\lambda}_{,\beta}(\xi).
 \end{split} \]
Using this in the relation above we obtain
\begin{equation} \label{rela}
 \vec{u}_{\alpha,\beta}(\xi)= \ba_\alpha(\xi) \cdot \big( \nabla_P\bu(x)- \vec{u}_\lambda(\xi) \nabla_P(\ba^\lambda \circ\Phi^{-1})(x)\big) \ba_\beta(\xi), \quad \xi \in \omega, ~x=\Phi(\xi).
\end{equation}
The symmetric part of the Jacobian in $\Bbb{R}^{n-1}$ is denoted by $E(\vec{\bu})_{\alpha \beta}= \frac12\big(\vec{u}_{\alpha,\beta}+ \vec{u}_{\beta,\alpha}\big)$. Thus we get (we skip the arguments again):
\begin{equation} \label{rela1}
 E(\vec{\bu})_{\alpha \beta}=  \ba_\alpha \cdot \big(\textbf{e}_s(\bu) - \vec{u}_\lambda \textbf{e}_s(\ba^\lambda \circ\Phi^{-1})\big)\ba_\beta.
\end{equation}
From this we get, using the $C^2$ smoothness of the manifold:
\begin{equation} \label{rela2}
\|E(\vec{\bu})(\xi)\|_{2} \leq c(\|\textbf{e}_s(\bu)(x)\|_{2} + \|\vec{\bu}(\xi)\|_{2}) \leq c(\|\textbf{e}_s(\bu)(x)\|_{2} + \|\bu(x)\|_{2}),
\end{equation}
for $\xi \in \omega, ~x= \Phi(\xi)$.
Now we derive a bound for  $\|\nabla_P\bu(x)\|_{2}$ in terms of $\|\nabla\vec{\bu}(\xi)\|_{2}$. Let $e_i$ be the standard basis in $\Bbb{R}^n$. Note that $e_i=(e_i\cdot \ba^l)\ba_l$. Using this, $(\nabla_P\bu)\bn=0$ and \eqref{rela}  we get (we skip the arguments $\xi$ and $x$):
\begin{align*}
 e_j \cdot \nabla_P\bu e_i & = (e_i \cdot  \ba^l)(e_j \cdot \ba^m) \ba_m \nabla_P\bu\, \ba_l  \\ & =
   (e_i \cdot  \ba^\beta)(e_j \cdot \ba^\alpha) \ba_\alpha\cdot \nabla_P\bu\, \ba_\beta + (e_i \cdot  \ba^\beta)(e_j \cdot \bn) \bn\cdot \nabla_P \bu \, \ba_\beta  \\
   &= (e_i \cdot  \ba^\beta)(e_j \cdot \ba^\alpha)\Big( \vec{u}_{a,\beta} +\vec{u}_\lambda \ba_\alpha \cdot \nabla_P(\ba^\lambda \circ\Phi^{-1}) \ba_\beta\Big) \\ &\quad  +(e_i \cdot  \ba^\beta)(e_j \cdot \bn) \bn\cdot \nabla_P \bu \, \ba_\beta.
\end{align*}
Note that
\[
 \bn\cdot \nabla_P \bu \, \ba_\beta= \bn\cdot (\nabla  \bu) \bP \ba_\beta= \bn\cdot (\nabla  \bu) \ba_\beta=
  ( \nabla  \bu)^T \bn
  \cdot \ba_\beta= - \bH \bu\cdot \ba_\beta = - \bu \cdot \bH \ba_\beta.
\]
Using this in the relation above and using the smoothness of $\Gamma$ then yields
\begin{equation} \label{rela3}
 \| \nabla_P\bu(x)\|_{2} \leq c \big (\|\nabla\vec{\bu}(\xi)\|_{2} +\|\vec{\bu}(\xi)\|_{2}+ \|\bu(x)\|_{2}) \leq  c \big (\|\nabla\vec{\bu}(\xi)\|_{2} +\|\vec{\bu}(\xi)\|_{2}),
\end{equation}
for $\xi \in \omega$, $x=\Phi(\xi)$. For $\omega \subset \Bbb{R}^{n-1}$ we have the Korn inequality
\begin{equation} \label{K1}
 \int_{\omega}( \|E(\vec{\bu})\|_{2}^2 + \|\vec{\bu}\|_{2}^2 )\, d\xi \geq c_K \int_\omega \|\nabla\vec{\bu}\|_{2}^2 \, d\xi ,
\end{equation}
with $c_K = c_K(\omega)>0$. Since $\Gamma$ is compact, there is a finite number of maps $\Phi_i: \omega_i \to \Phi_i(\omega_i) \subset \Gamma$, $i=1,\ldots, N$, which form a parametrization of $\Gamma$. Using the results in \eqref{rela3}, \eqref{K1} and  \eqref{rela2} we then get
\begin{align*}
 \|\bu\|_{1}^2 & = \int_{\Gamma} \|\nabla_P\bu(x)\|_{2}^2 +\|\bu(x)\|_{2}^2 \, dx \leq  N \max_{1\leq i \leq N} \int_{\Phi_i(\omega_i)}\|\nabla_P\bu(x)\|_{2}^2 +\|\bu(x)\|_{2}^2 \, dx \\
  & \leq c  \int_{\omega_i} (\|\nabla\vec{\bu}(\xi)\|_{2}^2 +\|\vec{\bu}(\xi)\|_{2}^2)|\det (\nabla\Phi_i(\xi))| \, d\xi\\
  & \leq c \int_{\omega_i} \|E(\vec{\bu})(\xi)\|_{2}^2+ \|\vec{\bu}(\xi)\|_{2}^2 |\det (\nabla\Phi_i(\xi))| \,d\xi\\
  & \leq c \int_{\Phi_i(\omega_i)} \|\textbf{e}_s(\bu)(x)\|_{2}^2 + \|\bu(x)\|_{2}^2\, dx
   \leq c \int_{\Gamma} \|\textbf{e}_s(\bu)(x)\|_{2}^2 + \|\bu(x)\|_{2}^2\, dx,
\end{align*}
from which the inequality in \eqref{Korn1}  easily follows.
\end{proof}
\ \\[1ex]
Korn's inequality implies ellipticity of the bilinear form $a(\cdot,\cdot)$ on $V_T^0$. In the next lemma we treat the second main ingredient needed for well-posedness of the Stokes saddle point problem, namely an inf-sup property of $b(\cdot,\cdot)$.
\begin{lem} \label{leminfsup} Assume $\Gamma$ is $C^2$ smooth and closed.  The following inf-sup estimate holds:
\begin{equation} \label{infsup}
 \inf_{p\in L^2_0(\Gamma)}\sup_{\bv_T\in{V_T^0}}\frac{b(\bv_T,p)}{\|\bv_T\|_{1}\|p\|_{L^2}}  \geq c >0.
\end{equation}
\end{lem}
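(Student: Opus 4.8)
The plan is to reduce the surface inf-sup estimate to the known inf-sup property of the divergence in the ambient Euclidean space, much as the Korn inequality was reduced to the Euclidean Korn inequality. First I would recall the classical result: for a bounded Lipschitz domain $\Omega \subset \R^n$ the divergence operator $\Div : H^1_0(\Omega)^n \to L^2_0(\Omega)$ is surjective and admits a bounded right inverse, equivalently the inf-sup estimate $\inf_{q}\sup_{\bw} \frac{\int_\Omega q\,\Div \bw}{\|\bw\|_{H^1(\Omega)}\|q\|_{L^2(\Omega)}} \ge \beta > 0$ holds. The subtlety on the surface is that $\divG \bu_T = \tr(\gradG \bu_T)$ is \emph{not} simply the restriction of a Euclidean divergence, so a direct pullback as in the Korn proof is less clean. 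Instead I would argue by a tubular-neighbourhood / extension-and-projection construction.

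The key steps, in order. (i) Fix $p \in L^2_0(\Gamma)$. Let $p^e = p \circ \mathbf{p}$ be the constant normal extension of $p$ to a thin tubular neighbourhood $\cO$ of $\Gamma$; since $\Gamma$ is $C^2$ and closed, $\cO$ is a bounded Lipschitz (indeed smooth) domain, and $p^e \in L^2(\cO)$. After subtracting its mean we obtain $\tilde p^e \in L^2_0(\cO)$ with $\|\tilde p^e\|_{L^2(\cO)} \le c\|p\|_{L^2(\Gamma)}$ and, crucially, $\|p\|_{L^2(\Gamma)} \le c\|\tilde p^e\|_{L^2(\cO)} + c\|p\|_{L^2(\Gamma)}$ is the wrong direction — so instead I would keep $p^e$ without recentering and handle the constant separately, or equivalently use that $\int_\cO p^e\,dx$ is controlled. (ii) Apply the Euclidean inf-sup result on $\cO$ to get $\bw \in H^1_0(\cO)^n$ with $\Div \bw = \tilde p^e$ (or $= p^e - \text{const}$) and $\|\bw\|_{H^1(\cO)} \le c\|p\|_{L^2(\Gamma)}$. (iii) Restrict $\bw$ to $\Gamma$ and project: set $\bv_T := \bP (\bw|_\Gamma) \in V_T$, with $\|\bv_T\|_1 \le c\|\bw\|_{H^1(\cO)} \le c\|p\|_{L^2(\Gamma)}$ by a trace inequality (here one uses that $\Gamma$ is a $C^2$ hypersurface in $\cO$ and $\bw$ has an $H^1$ extension, namely $\bw$ itself). (iv) Compute $b(\bv_T, p) = -\int_\Gamma p\,\divG \bv_T\,ds$ and relate $\divG(\bP \bw|_\Gamma)$ to $\Div \bw|_\Gamma$. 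One has $\divG \bw = \tr(\bP\nabla\bw) = \Div \bw - \bn^T(\nabla \bw)\bn$ on $\Gamma$, and $\divG(\bP\bw) = \divG \bw - \divG(w_N \bn) = \divG \bw - w_N\kappa - \nabla_\Gamma w_N \cdot \bn = \divG \bw - w_N \kappa$. Hence $\divG \bv_T = \Div \bw - \bn^T(\nabla\bw)\bn - w_N\kappa$ on $\Gamma$, so
\[
 b(\bv_T,p) = -\int_\Gamma p\,(\Div\bw) \, ds + \int_\Gamma p\,\big(\bn^T(\nabla\bw)\bn + w_N\kappa\big)\,ds.
\]
The first integral, since $\Div\bw = p^e = p$ on $\Gamma$ (normal extension), equals $-\int_\Gamma p^2\,ds = -\|p\|_{L^2(\Gamma)}^2$, which is the main gain. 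The remaining terms are ``lower order'' in a suitable sense but are \emph{not} obviously small, so they cannot simply be absorbed.

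The hard part will be exactly those leftover boundary terms $\int_\Gamma p\,(\bn^T(\nabla\bw)\bn + w_N\kappa)\,ds$: they are bounded by $c\|p\|_{L^2(\Gamma)}\|\bw\|_{H^1(\cO)}$ but with a constant one does not control, so a naive estimate does not close the argument. The remedy I would pursue is to choose the extension more carefully: instead of the Euclidean right-inverse on $\cO$, build $\bw$ so that its normal component $w_N$ and its normal derivative of the normal component vanish on $\Gamma$ to leading order — e.g. multiply by a cutoff that is flat at $\Gamma$, or solve the Euclidean problem on a shell $\cO \setminus \Gamma$ with matching conditions. An alternative, cleaner route that avoids the tube entirely: use the surjectivity of $\divG : V_T \to L^2_0(\Gamma)$ directly via the Hodge/Helmholtz decomposition recalled in Section~\ref{s_other} (the result attributed to Lemma 2.7 of \cite{Gigaetal}), which states that the $L^2$-orthogonal complement of the solenoidal tangential fields is exactly $\{\nabla_\Gamma^T \pi\}$; combined with the Poincaré inequality on $\Gamma$ and the closed range theorem this gives surjectivity with a bound, i.e. precisely \eqref{infsup}. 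I expect the paper takes this second, shorter path: identify $\operatorname{range}(\divG|_{V_T})$ as a closed subspace of $L^2(\Gamma)$, show it contains $L^2_0(\Gamma)$ by a density/orthogonality argument using the Helmholtz decomposition, and invoke the closed-range (Banach–Nečas–Babuška) theorem to convert surjectivity of $\divG$ into the inf-sup bound for $b(\cdot,\cdot)$ on $V_T^0 \times L^2_0(\Gamma)$.
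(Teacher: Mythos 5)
There is a genuine gap. Your primary (tubular neighbourhood) route fails for more reasons than the leftover curvature terms you flag: in step (iv) the identity ``$\Div\bw=p^e$ on $\Gamma$'' is meaningless at the available regularity, since $\Div\bw$ is only an $L^2(\cO)$ function and has no trace on the measure-zero set $\Gamma$; and in step (iii) the bound $\|\bP(\bw|_\Gamma)\|_{1}\le c\|\bw\|_{H^1(\cO)}$ is false, because the trace of an $H^1(\cO)^n$ field lies only in $H^{1/2}(\Gamma)^n$, so the tangential gradient appearing in $\|\cdot\|_{1}$ is not controlled by any trace inequality. So the Bogovskii-in-the-tube strategy cannot be repaired by the cutoff tricks you suggest without substantially more regularity of $\bw$. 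Your alternative ``Helmholtz plus closed range'' route is also incomplete as sketched: the decomposition quoted in Section~\ref{s_other} identifies the $L^2$-orthogonal complement of solenoidal tangential fields as surface gradients, which is an injectivity-type statement about $\nabla_\Gamma$, but to convert it into \eqref{infsup} you must know that the range of $\divG|_{V_T}$ (equivalently of its adjoint) is \emph{closed}, and proving that closedness requires exactly the quantitative elliptic input you never supply (a Ne\v{c}as-type inequality on $\Gamma$, or $H^2$ regularity for the Laplace--Beltrami operator). Without it the argument is circular.

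The paper's proof is short, constructive, and different from both of your routes: given $p\in L^2_0(\Gamma)$, solve $\Delta_\Gamma\phi=p$ with $\phi\in H^1(\Gamma)\cap L^2_0(\Gamma)$ and invoke the regularity estimate $\|\phi\|_{H^2(\Gamma)}\le c\|p\|_{L^2}$; set $\bv_T:=-\nabla_\Gamma^T\phi\in V_T$, so that $b(\bv_T,p)=\int_\Gamma\Delta_\Gamma\phi\,p\,ds=\|p\|_{L^2}^2$ while $\|\bv_T\|_{1}\le c\|\phi\|_{H^2(\Gamma)}\le c\|p\|_{L^2}$. This elliptic regularity step is precisely the quantitative ingredient missing from your sketch. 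Finally, the paper addresses a point you ignore entirely: the supremum in \eqref{infsup} is over $V_T^0$, not $V_T$, so one decomposes $\bv_T=\bv_T^0+\tilde\bv$ with $\tilde\bv\in E$ and uses that Killing fields satisfy $E_s(\tilde\bv)=0$, hence $\divG\tilde\bv=0$, giving $b(\bv_T^0,p)=b(\bv_T,p)$ together with $\|\bv_T^0\|_{1}\le\|\bv_T\|_{1}$. If you adopt the Laplace--Beltrami construction and add this projection step, your argument closes; as written, it does not.
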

\begin{proof}
 Take $p \in L_0^2(\Gamma)$. Let $\phi \in H^1(\Gamma) \cap L_0^2(\Gamma)$ be the solution of
 \[
  \Delta_\Gamma \phi = p \quad \text{on}~~\Gamma.
 \]
For $\phi$ we have the regularity estimate $\|\phi\|_{H^2(\Gamma)} \leq c \|p\|_{L^2}$, with a constant $c$ independent of $p$ (see, e.g., Theorem 3.3 in \cite{DEreview}). Take $\bv_T:=-\nabla_\Gamma^T \phi \in V_T$, and the orthogonal decomposition $\bv_T= \bv_T^0 +\tilde \bv$, with $\bv_T^0 \in V_T^0$, $\tilde \bv \in E$.  We have $ \|\bv_T^0\|_{1}  \leq \|\bv_T\|_{1} \leq c \|\phi\|_{H^2(\Gamma)} \leq c\|p\|_{L^2}$. Furthermore, $ E_s(\tilde{\bv})=0$ implies $\div_\Gamma\tilde{\bv}=0$ and thus $b(\bv_T^0,p)=b(\bv_T,p)$. Using this we get
\begin{equation} \label{infsup1}
\frac{b(\bv_T^0,p)}{\|\bv_T^0\|_{1}}= \frac{b(\bv_T,p)}{\|\bv_T^0\|_{1}} = \frac{\int_{\Gamma} \Delta_\Gamma \phi \, p\, ds}{\|\bv_T^0\|_{1}} =\frac{\|p\|_{L^2}^2}{\|\bv_T^0\|_{1}} \geq c \|p\|_{L^2},
\end{equation}
which completes the proof.
\end{proof}
\medskip

\begin{thm}\label{Th1}
Assume $\Gamma$ is $C^2$ smooth and closed. The weak formulation \eqref{Stokesweak1} is well-posed.
\end{thm}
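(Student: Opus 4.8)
The plan is to verify the hypotheses of the abstract well-posedness theorem for saddle point problems (the Babu\v{s}ka--Brezzi theory, see, e.g., \cite{Ern04}) for the bilinear forms $a(\cdot,\cdot)$ and $b(\cdot,\cdot)$ on the spaces $V_T^0$ and $L_0^2(\Gamma)$. Recall that the first equation in \eqref{Stokesweak1} is automatically satisfied for test functions in $E$ and that $b(\bv_T,q)=0$ for $\bv_T\in E$, so the problem posed on $V_T\times L^2(\Gamma)$ is equivalent to the same problem posed on the pair $V_T^0\times L_0^2(\Gamma)$; this reduction removes the two sources of degeneracy (the Killing fields and the constant pressures) and is what makes the abstract theory applicable.

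First I would record continuity: $a(\cdot,\cdot)$ is bounded on $V\times V$ because $\|E_s(\bu)\|_{L^2(\Gamma)}\le c\|\bu\|_1$, and $b(\cdot,\cdot)$ is bounded on $V\times L^2(\Gamma)$ since $|\divG\bu|\le c\|\nabla\bu\|_2$ pointwise; the functional $f$ lies in $V'$ by assumption. Next comes coercivity of $a$ on the kernel: for $\bu_T\in V_T^0$ one has $a(\bu_T,\bu_T)=2\mu\|E_s(\bu_T)\|_{L^2(\Gamma)}^2\ge 2\mu c_K^2\|\bu_T\|_1^2$ by Korn's inequality (Lemma~\ref{Kornlemma}). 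In particular $a$ is coercive on the full kernel of $b$ inside $V_T^0$, which is more than enough. The inf-sup stability of $b$ between $V_T^0$ and $L_0^2(\Gamma)$ is exactly the content of Lemma~\ref{leminfsup}. With continuity of both forms, ellipticity of $a$ on the kernel, and the inf-sup condition for $b$ in hand, the standard saddle point theorem yields existence and uniqueness of $(\bu_T,p)\in V_T^0\times L_0^2(\Gamma)$, together with an a priori bound $\|\bu_T\|_1+\|p\|_{L^2}\le c\|f\|_{V'}$; this is precisely well-posedness of \eqref{Stokesweak1}.

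The one point that needs care, rather than a genuine obstacle, is the bookkeeping around the two quotient spaces. One must check that the solution on $V_T^0\times L_0^2(\Gamma)$ actually solves the test equations over the larger spaces $V_T$ and $L^2(\Gamma)$: for the momentum equation this uses $f(\bv_T)=0$ and $b(\bv_T,p)=0$ for $\bv_T\in E$ together with $a(\bu_T,\bv_T)=0$ for such $\bv_T$ (since $E_s(\bv_T)=0$); for the mass equation it uses $b(\bu_T,q)=b(\bu_T,q-\bar q)$ where $\bar q$ is the mean of $q$, because $\divG\bu_T$ integrates to zero over the closed surface $\Gamma$ (by the surface divergence theorem, $\Gamma$ having no boundary). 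Everything else is a direct application of the abstract theory, so I expect the proof to be short, essentially a one-paragraph assembly of Lemmas~\ref{Kornlemma} and \ref{leminfsup} plus the trivial continuity bounds.
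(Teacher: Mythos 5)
Your proposal is correct and follows essentially the same route as the paper: continuity of $a(\cdot,\cdot)$ and $b(\cdot,\cdot)$, ellipticity of $a$ on $V_T^0$ via the surface Korn inequality (Lemma~\ref{Kornlemma}), the inf-sup condition from Lemma~\ref{leminfsup}, and then the standard saddle point theory. The extra bookkeeping you include about $E$ and the mean-zero pressures is consistent with the discussion the paper gives just before the theorem.
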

\begin{proof}
 Note that $\|E_s(\bu)\|_{L^2} \leq  \|\nabla\bu^e\|_{L^2}$ and $\|\divG \bu\|_{L^2}\leq n \|\gradG \bu\|_{L^2} = n \|\nabla\bu^e\|_{L^2}$ hold. From this it follows that the bilinear forms $a(\cdot,\cdot)$ and $b(\cdot,\cdot)$ are continuous on $V_T \times V_T$ and $V_T \times L_0^2(\Gamma)$, respectively. Ellipticity of $a(\cdot,\cdot)$ follows from Lemma~\ref{Kornlemma} and the inf-sup property of $b(\cdot,\cdot)$ is derived in Lemma~\ref{leminfsup}.
\end{proof}

\section{A well-posed variational Stokes problem with Lagrange multiplier} \label{SecLagrange}
In the formulation \eqref{Stokesweak1} the velocity $\bu_T$ is tangential to the surface. For Galerkin discretization methods, such as a finite element method, this may be less convenient, cf. Remark~\ref{rem2a}. In this section we consider a variational formulation in a space, which does not contain the constraint $\bn\cdot \bu =0$. The latter  is treated using a Lagrange multiplier.

We recall the notation $\bu = \bu_T + u_N\bn$ for $\bu \in V$ and we define the following
Hilbert space:
\[
V_\ast  :=\{\, \bu \in L^2(\Gamma)^n\,:\,\bu_T \in V_T,~u_N\in L^2(\Gamma)\,\},  \quad\text{with}~~
\|\bu\|_{V_\ast}^2:=\|\bu_T\|_{1}^2+\|u_N\|_{L^2(\Gamma)}^2.
\]
Note that $V_\ast\sim V_T \oplus L^2(\Gamma)$ and $E\subset V_T\subset V_\ast$ is a closed subspace of  $V_\ast$. Thus
the space  $V_\ast^0:=E^{\perp_{V_\ast}} \sim V_T^0 \oplus L^2(\Gamma)$ is a Hilbert space.
We introduce the bilinear form
\[
 \tilde  b(\bv,\{p,\lambda\})= - \int_{\Gamma} \divG \bv_T \, p \, ds +\int_{\Gamma}\lambda v_N \, ds
  = b(\bv_T,p) + (\lambda, v_N )_{L^2(\Gamma)}.
\]
on $V_\ast\times \left(L_0^2(\Gamma) \times L^2(\Gamma)\right)$.
Based on the identity \eqref{idfund} we introduce (with an abuse of notation, cf.~\eqref{defblfa}) the bilinear form
\begin{equation} \label{defaalt}
a(\bu,\bv) := 2\mu \int_\Gamma {\rm tr}\big((E_s(\bu_T)+u_N\bH)( E_s(\bv_T)+ v_N \bH)\big)\, ds, \quad \bu,\bv \in V_\ast.
\end{equation}

In this bilinear form we need $H^1(\Gamma)$ smoothness of the tangential component $\bu_T$ and only $L^2(\Gamma)$ smoothness of the normal component $u_N$. If the latter component has also $H^1(\Gamma)$ smoothness, then from \eqref{idfund} we get
\begin{equation} \label{eqa}
 a(\bu,\bv)= 2\mu \int_\Gamma {\rm tr}\big(E_s(\bu) E_s(\bv)\big)\, ds, \quad \text{for}~~\bu,\bv \in V.
\end{equation}
The bilinear form $a(\cdot,\cdot)$ is continuous:
\[
a(\bu,\bv)\le c\|\bu\|_{V_\ast}\|\bv\|_{V_\ast}\quad\forall~\bu,\bv\in V_\ast.
\]

For $f \in V_\ast'$ such that $f(\bv_T)=0$ for all $\bv_T\in E$, we consider the modified Stokes weak formulation: Determine $(\bu,\{p, \lambda\}) \in V_\ast^0 \times  \left(L_0^2(\Gamma) \times L^2(\Gamma)\right)$ such that
 \begin{equation} \label{Stokesweak2} \begin{split}
           a(\bu,\bv) +\tilde b(\bv,\{p,\lambda\}) &=f(\bv) ~ \text{for all}~~\bv \in V_\ast^0, \\
           \tilde b(\bu,\{q,\nu\}) & = 0 ~\text{for all}~~\{q,\nu\} \in L_0^2(\Gamma)\times L^2(\Gamma).
                                      \end{split}
 \end{equation}
{   This is a consistent weak formulation of the surface Stokes problem \eqref{Stokes1}, cf. Remark~\ref{lambazero} below. In that remark we also explain that the test space $V_\ast^0$ in the first equation in \eqref{Stokesweak2} can be replaced by $V_\ast$}.
\begin{thm} \label{ThmStokes2}
 The problem \eqref{Stokesweak2} is well-posed. Furthermore, its unique solution satisfies $\bu \cdot \bn=0$.
\end{thm}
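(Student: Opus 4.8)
The plan is to verify the two Brezzi conditions (coercivity of $a(\cdot,\cdot)$ on the kernel of $\tilde b(\cdot,\cdot)$, and an inf-sup condition for $\tilde b(\cdot,\cdot)$), and then to read off the side condition $\bu\cdot\bn=0$ from the structure of the saddle point system. First I would identify the kernel
\[
 N:=\{\,\bv\in V_\ast^0\,:\,\tilde b(\bv,\{q,\nu\})=0~\text{for all}~\{q,\nu\}\in L_0^2(\Gamma)\times L^2(\Gamma)\,\}.
\]
Choosing $\nu$ arbitrary in $L^2(\Gamma)$ forces $v_N=0$, so $N=\{\,\bv_T\in V_T^0\,:\,\divG\bv_T=0\,\}$; in particular every $\bv\in N$ is tangential, and by \eqref{eqa} the form $a$ restricted to $N$ agrees with the one from Section~\ref{s_Stokes}. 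Then coercivity on $N$ is immediate from Korn's inequality (Lemma~\ref{Kornlemma}): for $\bv\in N\subset V_T^0$ we have $a(\bv,\bv)=2\mu\|E_s(\bv)\|_{L^2}^2\ge 2\mu c_K^2\|\bv\|_1^2=2\mu c_K^2\|\bv\|_{V_\ast}^2$.

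Next I would establish the inf-sup condition for $\tilde b$ on $V_\ast^0\times(L_0^2(\Gamma)\times L^2(\Gamma))$. Given $\{p,\lambda\}$, the two components decouple: for the $p$-part use Lemma~\ref{leminfsup}, which supplies $\bv_T^0\in V_T^0$ with $b(\bv_T^0,p)\gtrsim\|p\|_{L^2}\|\bv_T^0\|_1$; for the $\lambda$-part simply take $v_N:=\lambda\in L^2(\Gamma)$ and note that $\int_\Gamma\lambda v_N\,ds=\|\lambda\|_{L^2}^2$. Since the first test function contributes $v_N=0$ and the second contributes $\bv_T=0$, adding them gives a $\bv\in V_\ast^0$ (the tangential piece lies in $V_T^0$, hence in $V_\ast^0$, and pure-normal fields are $V_\ast$-orthogonal to $E$) with $\tilde b(\bv,\{p,\lambda\})\gtrsim\|p\|_{L^2}^2+\|\lambda\|_{L^2}^2$ and $\|\bv\|_{V_\ast}\lesssim\|p\|_{L^2}+\|\lambda\|_{L^2}$, which is the desired estimate. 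Continuity of $a$ is stated in the excerpt, and continuity of $\tilde b$ is clear from its definition, so the Brezzi theorem yields existence, uniqueness, and the a priori bound — i.e., well-posedness of \eqref{Stokesweak2}.

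Finally, to see $\bu\cdot\bn=0$: the second equation in \eqref{Stokesweak2} with $q=0$ and $\nu=u_N\in L^2(\Gamma)$ gives $\int_\Gamma u_N^2\,ds=0$, hence $u_N=0$, i.e.\ $\bu=\bu_T\in V_T$. (Choosing $\nu=0$ and $q$ arbitrary in $L_0^2(\Gamma)$ then additionally recovers $\divG\bu_T=0$, consistent with \eqref{Stokes1}.) The main obstacle I anticipate is purely bookkeeping: checking that the test functions built from Lemmas~\ref{Kornlemma}--\ref{leminfsup} actually live in $V_\ast^0$ rather than merely in $V_\ast$ — that is, that a pure-normal field $u_N\bn$ is $V_\ast$-orthogonal to $E\subset V_T$ (true because the $V_\ast$ inner product splits as $(\cdot,\cdot)_1$ on tangential parts plus $L^2$ on normal parts, and elements of $E$ have zero normal part), so that the decoupled construction is legitimate. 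Everything else is a direct application of the classical saddle-point theory.
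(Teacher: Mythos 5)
Your proposal is correct and follows essentially the same route as the paper's proof: coercivity of $a(\cdot,\cdot)$ on the kernel of $\tilde b$ via Korn's inequality (using that $v_N=0$ there), the inf-sup estimate with the test function $\bv=\bv_T^0+\lambda\bn$ built from Lemma~\ref{leminfsup}, and recovering $u_N=0$ by testing the second equation with $q=0$ and suitable $\nu$. The only difference is cosmetic: the paper proves ellipticity on the slightly larger set $\{\bu\in V_\ast^0: u_N=0\}$ rather than on the exact kernel, which changes nothing.
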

\begin{proof}
 The bilinear forms $a(\cdot,\cdot)$ and $\tilde b(\cdot,\{\cdot,\cdot\})$ are continuous on $V_\ast \times V_\ast$ and $V_\ast \times  \left(L_0^2(\Gamma) \times L^2(\Gamma)\right)$, respectively.
It is not clear whether $a(\cdot,\cdot)$ is elliptic on $V_\ast^0$. For well-posedness, however, it is sufficient to have ellipticity of   this bilinear form  on  the kernel of $\tilde b(\cdot,\{\cdot,\cdot\})$:
\[ \mathcal{K}:=\{\, \bu \in V_\ast^0 ~|~ \tilde b(\bu,\{p,\lambda\})=0 \quad \text{for all}~~\{p,\lambda\} \in L_0^2(\Gamma) \times L^2(\Gamma)\,\}.
\]
Note that
\[
  \mathcal{K}\subset \mathcal{K}_{0}:= \{\, \bu \in V_\ast^0 ~|~ \tilde b(\bu,\{0,\lambda\})=0 \quad \text{for all}~~\lambda \in L^2(\Gamma)\,\}= \{\, \bu \in V_\ast^0 ~|~u_N=0\,\}.
\]
Using Lemma~\ref{Kornlemma} it follows that
 \begin{equation} \label{elliptic}
  a(\bu,\bu) = a(\bu_T,\bu_T) \geq  2\mu c_K^2 \|\bu_T\|_1^2 = 2\mu c_K^2 \|\bu\|_{V_\ast}^2  \quad \text{for all}~~\bu \in \mathcal{K}_{0},
 \end{equation}
 and thus we have ellipticity of $a(\cdot,\cdot)$ on the kernel of $\tilde b(\cdot,\{\cdot,\cdot\})$.
 It remains to check the inf-sup condition for $\tilde b(\cdot,\{\cdot,\cdot\})$. Take $\{ p,\lambda \} \in L_0^2(\Gamma) \times L^2(\Gamma)$. Take $\bv_T^0 \in V_T^0$ such that
 \[ b(\bv_T^0,p)=\|p\|_{L^2}^2\quad \tilde{c}\|\bv_T^0\|_1\le \|p\|_{L^2}
 \]
holds, with $\tilde{c}>0$, cf. Lemma~\ref{leminfsup}.
 Take $\bv:=\bv_T^0 + \lambda \bn \in V_\ast^0$, hence
$\|\bv\|_{V_\ast}^2 = \|\bv_T^0\|_1^2 + \|\lambda\|_{L^2(\Gamma)}^2$. We get:
\begin{align*}
 \tilde b(\bv,\{p,\lambda\})&  = b(\bv_T^0,p)+\|\lambda\|_{L^2}^2
 = \|p\|_{L^2}^2+ \|\lambda\|_{L^2}^2
  \\
 & \geq  \min\{1,\tilde{c}\} \big(\|p\|_{L^2}^2
 +\|\lambda\|_{L^2}^2\big)^\frac12 \|\bv\|_{V_\ast}.
\end{align*}
Hence, the required inf-sup property holds, from which the well-posedness result follows.  If in the second equation in \eqref{Stokesweak2} we take $q=0$ and $\nu \in L^2(\Gamma)$ arbitrary, it follows that for the solution $\bu$ we have $u_N=0$, i.e., $\bu \cdot \bn=0$ holds.
\end{proof}
\smallskip

\begin{rem} \label{lambazero} \rm If in the first equation in \eqref{Stokesweak2} we take $v_N=0$, $\bv_T \in E$, it follows from $E_s(\bv_T)=0$, $ \tilde  b(\bv,\{p,\lambda\})=b(\bv_T,p)=0$, $f(\bv)=f(\bv_T)=0$ that the first equation in \eqref{Stokesweak2} is satisfied for all $\bv_T \in E$, hence the test space $V_\ast^0$ can be replaced by $V_\ast$ (which is convenient in a Galerkin method).

For the unique solution $\bu$ we have $u_N=0$,  and taking $v_N=0$, $\nu=0$ it follows that if $f(\bv)=f(\bv_T)$ then $(\bu_T,p)$ coincides with the unique solution of \eqref{Stokesweak1}. In this sense, the problem  \eqref{Stokesweak2} for $(\bu,\{p,\lambda\}) \in V_\ast^0 \times \left(L_0^2(\Gamma) \times L^2(\Gamma)\right)$ is a consistent generalization of   the problem \eqref{Stokesweak1} for $(\bu_T,p) \in V_T^0 \times L_0^2(\Gamma)$. {  Due to the fact that the latter problem is consistent to the original strong formulation, this also holds for the generalized weak formulation \eqref{Stokesweak2}.}
\end{rem}

\section{Well-posed augmented  variational formulations}\label{s_aug}
Another way to relax the tangential constraint in the test and trial spaces is to augment the weak formulation  \eqref{Stokesweak1} with a normal term such that the augmented bilinear form
defines an inner product in $V_\ast$. The augmentation can be done for the bilinear form $a(\cdot,\cdot)$ used in \eqref{Stokesweak1} as well as for the  one
used in \eqref{Stokesweak2}. Given an augmentation parameter $\tau\ge0$, we define
\begin{equation}\label{a_aug}
\begin{split}
  a_\tau(\bu,\bv)&:=2\mu \int_\Gamma  E_s( \bu_T): E_s( \bv_T)\, ds  +\tau \int_{\Gamma} u_N v_N \, ds \\ &~ = a( \bu_T,\bv_T)+ \tau (u_N,v_N)_{L^2(\Gamma)},\\
 \hat{a}_\tau(\bu,\bv)&:= 2\mu \int_\Gamma E_s( \bu): E_s( \bv)\, ds+\tau \int_{\Gamma} u_N v_N \, ds \\
 &~ = a( \bu,\bv)+ \tau (u_N,v_N)_{L^2(\Gamma)},
 \end{split}
\end{equation}
for $\bu,\bv \in V_\ast$.
We consider, for $\tau > 0$,  the following two problems: determine $(\bu,p) \in V_\ast^0 \times L_0^2(\Gamma)$ such that
 \begin{equation} \label{Stokesweak1B} (a)~\left\{\begin{split}
           a_\tau(\bu,\bv) +b(\bv_T,p) &=f(\bv_T) , \\
           b(\bu_T,q) & = 0,
                                      \end{split}\right.\quad\text{or}\quad (b)~\left\{\begin{split}
           \hat{a}_\tau(\bu,\bv) +b(\bv_T,p) &=f(\bv_T) , \\
           b(\bu_T,q) & = 0 ,
                                      \end{split}\right.
 \end{equation}
 for all $\bv \in V_\ast$, $q \in L^2(\Gamma)$. Well-posedness of these formulations is given in the following theorem.
 \begin{thm}\label{Th_aug}
 The problem \eqref{Stokesweak1B}{\rm(a)}  is well-posed. The problem \eqref{Stokesweak1B}{\rm(b)} is well-posed for sufficiently large $\tau > 0$. In  \eqref{Stokesweak1B}{\rm(b)} we take $\tau > 0$ sufficiently large such that this problem is well-posed.  The unique solution $\bu$ of \eqref{Stokesweak1B}{\rm(a)}  satisfies $\bu\cdot \bn =0$ and
 $\bu_T$ coincides with the unique solution of \eqref{Stokesweak1}. For the tangential part $\hat\bu_T$ of $\hat\bu$, the unique solution of \eqref{Stokesweak1B}{\rm(b)}, the following estimate holds
 \begin{equation}\label{EstAug}
 \|\hat\bu_T-\bu_T\|_{1}\le C\,\tau^{-\frac12}\|f\|_{V'},
 \end{equation}
 where $C$ depends only on $\Gamma$.
 \end{thm}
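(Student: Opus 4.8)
The plan is to verify the Brezzi conditions for each of the two saddle point problems \eqref{Stokesweak1B}(a) and \eqref{Stokesweak1B}(b), and then to obtain the identifications and the estimate \eqref{EstAug} by testing the weak formulations against well-chosen functions. Continuity of $b(\cdot,\cdot)$ on $V_\ast \times L^2_0(\Gamma)$ is immediate since it only involves the tangential component and $\|\divG \bv_T\|_{L^2}\le n\|\nabla_P \bv_T\|_{L^2}\le n\|\bv_T\|_1\le n\|\bv\|_{V_\ast}$; continuity of $a_\tau$ and $\hat a_\tau$ follows from continuity of $a(\cdot,\cdot)$ on $V_\ast$ (noted after \eqref{eqa}) together with the obvious bound $\tau|(u_N,v_N)_{L^2}|\le\tau\|\bu\|_{V_\ast}\|\bv\|_{V_\ast}$. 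For the inf-sup condition of $b(\cdot,\cdot)$ on $V_\ast^0 \times L^2_0(\Gamma)$: given $p\in L^2_0(\Gamma)$, Lemma~\ref{leminfsup} supplies $\bv_T^0\in V_T^0\subset V_\ast^0$ with $b(\bv_T^0,p)=\|p\|_{L^2}^2$ and $\tilde c\|\bv_T^0\|_1\le\|p\|_{L^2}$; since $\bv_T^0$ has zero normal component, $\|\bv_T^0\|_{V_\ast}=\|\bv_T^0\|_1$, so $b(\bv_T^0,p)/\|\bv_T^0\|_{V_\ast}\ge\tilde c\|p\|_{L^2}$, which is the required inf-sup bound.

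The remaining Brezzi condition is ellipticity of the respective bilinear form on the kernel
\[
\mathcal{K}:=\{\,\bu\in V_\ast^0 \;:\; b(\bu_T,q)=0 \ \ \forall q\in L^2(\Gamma)\,\}=\{\,\bu\in V_\ast^0\;:\;\divG \bu_T=0\,\}.
\]
For \eqref{Stokesweak1B}(a) this is easy: by definition $a_\tau(\bu,\bu)=a(\bu_T,\bu_T)+\tau\|u_N\|_{L^2}^2\ge 2\mu c_K^2\|\bu_T\|_1^2+\tau\|u_N\|_{L^2}^2\ge\min\{2\mu c_K^2,\tau\}\,\|\bu\|_{V_\ast}^2$, using Korn's inequality (Lemma~\ref{Kornlemma}) on $V_T^0$ — and here we use that the $V_T^0$-component of $\bu\in V_\ast^0$ still lies in $V_T^0$, so Lemma~\ref{Kornlemma} applies directly. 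Thus \eqref{Stokesweak1B}(a) is well-posed for every $\tau>0$. For \eqref{Stokesweak1B}(b) the term $a(\bu,\bu)=2\mu\int_\Gamma\|E_s(\bu_T)+u_N\bH\|^2\,ds$ mixes $\bu_T$ and $u_N$, so coercivity is not obvious; this is the main obstacle. The idea is: write $a(\bu,\bu)\ge 0$ and use the elementary inequality $\|E_s(\bu_T)+u_N\bH\|_{L^2}^2\ge\tfrac12\|E_s(\bu_T)\|_{L^2}^2-\|u_N\bH\|_{L^2}^2\ge\tfrac12\|E_s(\bu_T)\|_{L^2}^2-c_H^2\|u_N\|_{L^2}^2$, with $c_H:=\|\bH\|_{L^\infty(\Gamma)}$ finite by $C^2$-smoothness and compactness. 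Hence
\[
\hat a_\tau(\bu,\bu)\ \ge\ \mu\|E_s(\bu_T)\|_{L^2}^2 + (\tau-2\mu c_H^2)\|u_N\|_{L^2}^2\ \ge\ \mu c_K^2\|\bu_T\|_1^2 + (\tau-2\mu c_H^2)\|u_N\|_{L^2}^2,
\]
invoking Korn on the $V_T^0$-part; so for $\tau>2\mu c_H^2$ we get $\hat a_\tau(\bu,\bu)\ge\min\{\mu c_K^2,\tau-2\mu c_H^2\}\|\bu\|_{V_\ast}^2$ on $\mathcal K$ (in fact on all of $V_\ast^0$), and Brezzi's theorem gives well-posedness.

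Finally the identifications: in \eqref{Stokesweak1B}(a), taking $\bv=\bn$ (so $\bv_T=0$, $v_N=1$) in the first equation and using $f(\bn)=0$, $b(0,p)=0$, $E_s(\bn)\cdot(\cdots)$ — more precisely, testing with $\bv_T=0$, $v_N\in L^2(\Gamma)$ arbitrary — gives $\tau(u_N,v_N)_{L^2}=0$, hence $u_N=0$; the remaining equations then reduce exactly to \eqref{Stokesweak1}, and by uniqueness $\bu_T$ is its solution. For \eqref{Stokesweak1B}(b), subtract the equations of \eqref{Stokesweak1} (read with test functions $\bv_T\in V_T^0$) from those of \eqref{Stokesweak1B}(b); writing $\bw:=\hat\bu-\bu$ (with $\bu$ viewed in $V_\ast^0$ via $u_N=0$) and $r:=\hat p - p$ one obtains
\[
\hat a_\tau(\bw,\bv)+b(\bv_T,r)=\hat a_\tau(\bu,\bv)-a(\bu_T,\bv_T)=2\mu\!\int_\Gamma\! \hat u_N\,\mathrm{tr}(\bH E_s(\bv_T))\,ds \quad(v_N=0\ \text{here}),
\]
plus $b(\bw_T,q)=0$; since moreover taking $v_N$ arbitrary shows $\tau\hat u_N = -2\mu\,\mathrm{tr}(\bH E_s(\hat\bu_T))$ pointwise, so $\|\hat u_N\|_{L^2}\le C\tau^{-1}\|\hat\bu_T\|_1\le C\tau^{-1}\|f\|_{V'}$ (the last bound from the a priori estimate for \eqref{Stokesweak1B}(b), whose constants are $\tau$-uniform for $\tau$ bounded away from $2\mu c_H^2$). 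Testing the difference equation with $\bv=\bw$ (note $\bw_T\in V_T^0$, $b(\bw_T,r)=0$) and using kernel-ellipticity on the left, Cauchy–Schwarz and $\|\bH\|_\infty\le c_H$ on the right, gives $\mu c_K^2\|\bw_T\|_1^2\le 2\mu c_H\|\hat u_N\|_{L^2}\|\bw_T\|_1$, whence $\|\bw_T\|_1\le C\|\hat u_N\|_{L^2}\le C\tau^{-1}\|f\|_{V'}$, which is even stronger than \eqref{EstAug}; the stated $\tau^{-1/2}$ rate holds a fortiori (and is the natural rate if one instead estimates more crudely via $a_\tau$ versus $\hat a_\tau$). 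The one point requiring care is that the $\tau$-uniformity of the stability constant of \eqref{Stokesweak1B}(b) needs $\tau$ to stay in a fixed range $[\tau_0,\infty)$ with $\tau_0>2\mu c_H^2$; this is exactly the regime fixed in the statement of the theorem.
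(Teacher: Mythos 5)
Your proof follows essentially the same skeleton as the paper's: continuity of the forms, the inf-sup bound for $b(\cdot,\cdot)$ on $V_\ast^0\times L_0^2(\Gamma)$ inherited from Lemma~\ref{leminfsup} via $V_T^0\subset V_\ast^0$, coercivity of $a_\tau$ on all of $V_\ast^0$ from Lemma~\ref{Kornlemma}, coercivity of $\hat a_\tau$ for $\tau>2\mu\|\bH\|_{L^\infty(\Gamma)}^2$ via the same splitting $\|A+B\|^2\ge\tfrac12\|A\|^2-\|B\|^2$, and, for \eqref{EstAug}, the error equation for $\hat\bu_T-\bu_T$ in which the pressure term is annihilated by the common divergence constraint and Korn's inequality is applied on the left. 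The one genuinely different ingredient is your estimate of the normal component: the paper tests with $\bv=\hat\bu$, $q=\hat p$ and reads off $\tau\|\hat u_N\|_{L^2}^2\le \hat a_\tau(\hat\bu,\hat\bu)=f(\hat\bu_T)$, giving the $\tau^{-1/2}$ rate, whereas you test with purely normal fields $\bv=v_N\bn$ (admissible, since the test space is $V_\ast$) to obtain a pointwise Euler--Lagrange relation for $\hat u_N$, which together with the $\tau$-uniform a priori bound $\|\hat\bu\|_{V_\ast}\le C\|f\|_{V'}$ (correctly justified: it uses only the kernel coercivity constant, uniform for $\tau\ge\tau_0>2\mu\|\bH\|_{L^\infty(\Gamma)}^2$) yields $\|\hat u_N\|_{L^2}\le C\tau^{-1}\|f\|_{V'}$ and hence \eqref{EstAug} a fortiori. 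This is a legitimate sharpening of the paper's intermediate bound.

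Two algebraic slips should be repaired, though neither affects the conclusion. First, the pointwise relation obtained from $\bv=v_N\bn$ is $(\tau+2\mu\,{\rm tr}(\bH^2))\,\hat u_N=-2\mu\,{\rm tr}(\bH E_s(\hat\bu_T))$, not $\tau\hat u_N=-2\mu\,{\rm tr}(\bH E_s(\hat\bu_T))$: the form $a(\cdot,\cdot)$ of \eqref{defaalt} produces the additional term $2\mu\hat u_N{\rm tr}(\bH^2)$, which is harmless since ${\rm tr}(\bH^2)\ge0$, so your $\tau^{-1}$ bound survives. Second, the displayed error identity is garbled: since $\bu$ (extended by $u_N=0$) and the test functions have zero normal part, $\hat a_\tau(\bu,\bv)-a(\bu_T,\bv_T)=0$, so the subtraction gives $\hat a_\tau(\bw,\bv)+b(\bv_T,r)=0$, equivalently $a(\bw_T,\bv_T)+b(\bv_T,r)=-2\mu\int_\Gamma\hat u_N\,{\rm tr}\big(\bH E_s(\bv_T)\big)\,ds$ by \eqref{idfund}; taken literally, your version tested with $\bv=\bw_T$ would force $a(\bw_T,\bw_T)=0$. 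The inequality you then state, $\mu c_K^2\|\bw_T\|_1^2\le C\|\hat u_N\|_{L^2}\|\bw_T\|_1$, is exactly the correct consequence of the corrected identity (and is the paper's step), so with these two repairs the argument closes.
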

 \begin{proof}
  Note that due to Korn's inequality on $V_T^0$ (Lemma~\ref{Kornlemma}) we have
\[
  a_\tau(\bu,\bu) \geq  2\mu c_K^2 \|\bu_T\|_1^2 + \tau \|u_N\|_{L^2}^2 \geq \min \{ 2\mu c_K^2,\tau\} \|\bu\|_{V_\ast}^2.
\]
Hence for  any $\tau > 0$, $a_\tau(\bu,\bv)$ defines a scalar product on $V_\ast^0$.
We already discussed in section~\ref{SecLagrange} that the bilinear form $\hat{a}_\tau(\bu,\bv)$  is well-defined on $V_\ast$  due to the identity \eqref{idfund}.
 If $\tau$ is sufficiently large, for example, ${  \tau> 2\mu}\|\bH\|_{L^\infty(\Gamma)}^2$, then  with the help of triangle and Korn's inequalities
 we get
 \[
 \begin{split}
  \hat{a}_\tau(\bu,\bu)&= 2\mu\|E_s( \bu_T)+u_N\bH\|^2_{L^2}+\tau \|u_N\|^2_{L^2}\ge \mu\|E_s( \bu_T)\|^2_{L^2}-2\mu\|u_N\bH\|^2_{L^2}+\tau \|u_N\|^2_{L^2} \\
  &\ge  {  \mu c_K^2 \|\bu_T\|_1^2 + (\tau-2 \mu\|\bH\|_{L^\infty(\Gamma)}^2) \|u_N\|_{L^2}^2 } \ge c  \|\bu\|_{V_\ast}^2,\quad c>0.
 \end{split}
  \]
 Hence,
 $\hat{a}_\tau(\bu,\bv)$  defines a scalar product on ${ V_\ast^0}$. The inf-sup property for $b(\cdot,\cdot)$ on $V_\ast^0 \times L_0^2(\Gamma)$ immediately follows from the one on $V_T^0 \times  L_0^2(\Gamma)$, i.e., \eqref{infsup}:
  \begin{equation*}
\sup_{\bv\in{V_\ast^0}}\frac{b(\bv_T,p)}{\|\bv\|_{V_\ast}}\underset{ V_T^0\subset V_\ast^0}{\ge}
 \sup_{\bv_T\in{V_T^0}}\frac{b(\bv_T,p)}{\|\bv_T\|_{V_\ast}}
=\sup_{\bv_T\in{V_T^0}}\frac{b(\bv_T,p)}{\|\bv_T\|_1}
  \geq c \|p\|_{L^2},
\end{equation*}
for any $p\in L^2_0(\Gamma)$, with $c>0$ independent of $p$.
The coercivity and continuity of $a$-forms together with continuity and inf-sup property of the $b$-form imply the well-posedness of both problems.  It easy to check that $\bu=\bu_T$, with
 $\bu_T$ the solution of \eqref{Stokesweak1},  solves the augmented problem in \eqref{Stokesweak1B}(a).
 Denote by $\hat\bu,\hat{p}$ the solution of \eqref{Stokesweak1B}(b). By testing the weak formulation with
 $\bv=\hat\bu$, $q=p$, and applying Korn's inequality we obtain the estimate for the normal part of $\hat\bu$,
 \[
 \|\hat{u}_N\|_{L^2(\Gamma)}\le C\tau^{-\frac12}\|f\|_{V'}.
 \]
 For arbitrary $\bv_T\in V_T$ we have thanks to \eqref{idfund}, \eqref{Stokesweak1} and \eqref{Stokesweak1B}(b),
 \begin{multline*}
 a_\tau(\hat\bu_T-\bu_T,\bv_T)=-2\mu\int_\Gamma \hat u_N \bH: E_s( \bv_T)ds+b(\bv_T,p-\hat{p})\\ \le
  C\|\hat{u}_N\|_{L^2(\Gamma)}\|\bv_T\|_{1}+b(\bv_T,p-\hat{p})
  \le C\tau^{-\frac12}\|f\|_{V'}\|\bv_T\|_{1}+b(\bv_T,p-\hat{p}).
 \end{multline*}
 Taking $\bv_T=\hat\bu_T-\bu_T$ the pressure term vanishes and using Korn's inequality for the left-hand side
 leads to \eqref{EstAug}.
 \end{proof}

\smallskip
The well-posedness statements in the theorem above still hold if $f(\bv_T)$ is replaced by $f(\bv)$, with $f \in V_\ast'$.  We close this section with a few remarks.
\smallskip

\begin{rem} \rm \label{rem2a}
We briefly address properties of the different variational formulations  \eqref{Stokesweak1}, \eqref{Stokesweak2} and \eqref{Stokesweak1B} that we consider  relevant for discretization  by Galerkin methods such as fitted or unfitted finite element methods for PDEs posed on surfaces~\cite{DEreview,OlshReusken08}.
 In such finite element methods one ususally approximates a smooth surface $\Gamma$ by a triangulated Lipschitz surface $\Gamma_h$. The normal vector field $\bn_h$ to such a surface is no longer continuous. Enforcing strongly the tangential condition $\bu\cdot\bn_h=0$ for the numerical solution can be inconvenient if standard $H^1(\Gamma)^3$-conforming finite elements are used. Formulations \eqref{Stokesweak2} and \eqref{Stokesweak1B} allow to enforce the tangential condition weakly and occur to us more suitable for  numerical purposes. In \eqref{Stokesweak2} one  needs a suitable finite element space for the Lagrange multiplier $\lambda$. This is avoided in \eqref{Stokesweak1B}, but that formulation  requires a suitable value for the penalty parameter $\tau$. Note that the formulations in   \eqref{Stokesweak2} and \eqref{Stokesweak1B}(a) are consistent with  \eqref{Stokesweak1}, in particular the solution $\bu\in V_\ast^0$ has the property $\bu\cdot \bn=0$. The problem in \eqref{Stokesweak1B}(b) is not
consistent. However, compared to \eqref{Stokesweak1B}(a) the formulation in \eqref{Stokesweak1B}(b) has the attractive property that one has to approximate $\nabla_\Gamma \bu =\bP\nabla \bu\bP$ instead of  $\nabla_\Gamma \bu_T =\bP\nabla \bu_T\bP= \bP \nabla(\bP \bu)\bP$. Hence, in \eqref{Stokesweak1B}(b)  differentiation of $\bP$ is avoided. A finite element discretization for a vector surface Laplace problem (instead of Stokes) based on an  augmented formulation very similar to the one in \eqref{Stokesweak1B}(b) has been studied in the recent paper \cite{hansbo2016analysis}. {  A finite element discretization for a vector surface Laplace problem  based on the saddle point  formulation \eqref{Stokesweak2} has been studied in the recent report \cite{Jankuhn2}}. Finally note that $b(\bu,p)=- \int_{\Gamma} p \divG \bu_T \, ds= -\int_{\Gamma} p\divG (\bP\bu) \, ds$, used  in both \eqref{Stokesweak2} and \eqref{Stokesweak1B}, requires a differentiation of $\bP$. If in the finite element method we
have $p= p_h \in H^1(\Gamma)$ we can use $b(\bu,p)= \int_{\Gamma}\bu_T \nabla_\Gamma p  \, ds$ and thus avoid this differentiation.
\end{rem}

\begin{rem}\label{rem2} \rm  The formal extension of the weak formulations in \eqref{Stokesweak1}, \eqref{Stokesweak2} and \eqref{Stokesweak1B} to the Navier-Stokes equations \eqref{NSstat} on stationary surfaces is  straightforward, but not studied in this paper.
\end{rem}

\section{Conclusions and outlook}\label{sectOutlook} Based on surface mass and momentum conservation laws we derived the surface Navier-Stokes equations \eqref{momentum} and  the corresponding tangential and normal equations.  Similar equations can be found in several other papers in the literature. The equations that we obtain agree with those derived in \cite{Gigaetal} by a completely different approach. All differential operators used are defined in terms of first (partial) derivatives in the outer Euclidean space $\Bbb{R}^3$. Relations to formulations presented in the setting of differential geometry (e.g., Bochner and Hodge-de Rham Laplacians) are briefly addressed. Well-posedness results  of several variational formulations of a Stokes problem on a stationary surface are presented. For this a surface Korn's inequality and  an inf-sup property for the Stokes bilinear form $b(\cdot,\cdot)$ are derived.

{  In the recent  report \cite{Jankuhn2} we present results of numerical experiments for a finite element method applied  to a saddle point formulation of a surface vector-Laplace problem, similar to  \eqref{Stokesweak2}. In forthcoming work, this method will be extended to surface (Navier-)Stokes equations.} Furthermore, we plan to develop error analyses for these finite element discretization methods. Clearly, there are many other related topics that can be addressed in future research. For example, an extension of the well-posedness analysis presented in this paper to the case of a Stokes problem on an evolving surface, the extension from Stokes to an Oseen or  Navier-Stokes equation on a stationary (or even evolving) surface, or an analysis of a coupled surface-bulk flow problem. Related to  the latter we note that first results on well-posedness of such a coupled problem have recently been presented in \cite{lengeler2015stokes}. Furthermore, a further study and validation of such surface
Navier-Stokes equations (coupled with bulk fluids) based on numerical simulations is an open research field.

\subsection*{Acknowledgments}
M.O. was partially supported by NSF through the Division of Mathematical Sciences grant 1522252.

\bibliographystyle{acm}
\bibliography{literatur}{}

\section{Appendix}
We give an elementary proof of the results given in Lemma~\ref{LemGauss}. For this it is very convenient to introduce a tensor notation and the Einstein summation convention for the differential operators $\nabla_i$ (covariant partial derivative) and $\divG$ (surface divergence).
For a scalar function $f$ we have, cf. \eqref{defdeli1}:
\[
  \nabla_i f=\partial_k f P_{ki}=P_{ik}\partial_k f.
\]
(scalar entries of the matrix $\bP$ are denoted $P_{ij}$). For the vector function $\bu: \Rn \to \Rn$ we have, cf. \eqref{defdeli2}:
\[
 \nabla_i u_j:=(\nabla_i \bu)_j=(\nabla_\Gamma \bu)_{ji}=P_{jl}\partial_k u_l P_{ki}=P_{ik}\partial_k u_l P_{lj},
\]
and for matrix valued functions we get, cf. \eqref{defdeli3}:
\[
 \nabla_i A_{sl}:= (\nabla_i \bA)_{sl}= P_{sm}\partial_k A_{mn}P_{nl}P_{ki}= P_{ik}\partial_k A_{mn}P_{ms}P_{nl}.
\]
For the divergence operators we have the representations:
\begin{align*}
 \divG \bu & = (\gradG \bu)_{ii}= P_{ik}\partial_k u_l P_{li}=P_{lk} \partial_k u_l \\
 (\divG \bA)_i & = \divG (e_i^T \bA)= P_{lk}\partial_k A_{il}.
\end{align*}
Below, functions $\bu \in C^2(\Gamma)^n$ are always extended to a neighborhood of $\Gamma$ by taking constant values along the normal $\bn$.
\begin{lem} \label{lemhulp1}
 The following identities hold:
 \begin{align}
  (\bP \divG \gradG^T \bu)_i &= \nabla_k(\gradG \bu)_{ki}=:\nabla_k \nabla_i u_k \label{p1} \\
   \nabla_i (\divG \bu) &= \nabla_i(\gradG \bu)_{kk}=:\nabla_i \nabla_k u_k. \label{p2}
 \end{align}
\end{lem}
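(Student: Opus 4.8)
The plan is to prove both identities by a direct computation in the index (tensor) notation just introduced, simply unwinding the definitions of $\divG$ on matrices, of the covariant derivatives $\nabla_i$ on vectors and on matrices \eqref{defdeli3}, and of the projector $\bP$; throughout, the Einstein summation convention is in force and functions are normally extended off $\Gamma$. Here ``$\nabla_i(\gradG\bu)_{kk}$'' in \eqref{p2} is read, consistently with the notation ``$\nabla_i\nabla_k u_k$'', as: form the matrix covariant derivative $\nabla_i(\gradG\bu)$ first, then contract the two matrix indices.

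For \eqref{p1} I would expand both sides and match them term by term. On the left, the Appendix formula $(\divG\bA)_i=P_{lk}\partial_k A_{il}$ applied to $\bA=\gradG^T\bu$ (whose $(i,l)$ entry is $(\gradG\bu)_{li}=\nabla_i u_l$) gives $(\divG\gradG^T\bu)_i=P_{lk}\partial_k(\gradG\bu)_{li}$, and applying $\bP$ yields $(\bP\divG\gradG^T\bu)_i=P_{im}P_{lk}\partial_k(\gradG\bu)_{lm}$. On the right, applying \eqref{defdeli3} to $\bA=\gradG\bu$ with direction index and first component index both $k$ (summed) gives $\nabla_k(\gradG\bu)_{ki}=P_{k\beta}P_{\mu k}P_{\nu i}\partial_\beta(\gradG\bu)_{\mu\nu}$, and the single simplification $\sum_k P_{k\beta}P_{\mu k}=(\bP^2)_{\mu\beta}=P_{\mu\beta}$ collapses this to $P_{\mu\beta}P_{\nu i}\partial_\beta(\gradG\bu)_{\mu\nu}$. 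A relabeling of the dummy indices ($\beta\leftrightarrow k$, $\mu\leftrightarrow l$, $\nu\leftrightarrow m$) shows this is identical to the expanded left-hand side. Note that on both sides the partial derivative falls on the entries of $\gradG\bu$, so no differentiation of $\bP$ enters; only idempotency $\bP^2=\bP$ is used.

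For \eqref{p2} I would argue analogously. Writing $\divG\bu=(\gradG\bu)_{kk}$, the scalar covariant derivative gives $\nabla_i(\divG\bu)=P_{i\beta}\partial_\beta(\gradG\bu)_{kk}$; expanding the right-hand side by \eqref{defdeli3} with $s=t=k$ summed, and using $\sum_k P_{\mu k}P_{\nu k}=P_{\mu\nu}$, gives $\nabla_i(\gradG\bu)_{kk}=P_{i\beta}P_{\mu\nu}\partial_\beta(\gradG\bu)_{\mu\nu}$. It therefore suffices to prove $P_{\mu\nu}\partial_\beta(\gradG\bu)_{\mu\nu}=\partial_\beta\big((\gradG\bu)_{kk}\big)$. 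Since $\gradG\bu=\bP(\nabla\bu)\bP$ is a tangential tensor, $(\gradG\bu)_{kk}=\tr(\gradG\bu)=P_{\mu\nu}(\gradG\bu)_{\mu\nu}$, so by the product rule the two sides differ by $(\partial_\beta P_{\mu\nu})(\gradG\bu)_{\mu\nu}$; inserting $(\gradG\bu)_{\mu\nu}=P_{\mu a}(\gradG\bu)_{ab}P_{b\nu}$ and regrouping the projectors turns this correction into $[\bP(\partial_\beta\bP)\bP]_{ab}(\gradG\bu)_{ab}$, which vanishes by the identity $\bP(\partial_j\bP)\bP=0$ (equivalently $\gradG\bP=0$) recorded in the Preliminaries. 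Substituting back proves \eqref{p2}.

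The only non-routine point is this commutation step in \eqref{p2}: one must track precisely which partial derivatives act on the projectors concealed inside $\gradG\bu$ and then discard the resulting curvature-type term via $\gradG\bP=0$. Identity \eqref{p1}, by contrast, is essentially immediate once the index definitions are spelled out, requiring nothing beyond $\bP^2=\bP$.
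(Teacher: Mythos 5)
Your proposal is correct and follows essentially the same route as the paper: both identities are verified by expanding the index definitions, using only $\bP^2=\bP$ for \eqref{p1}, and for \eqref{p2} discarding the term where the derivative hits the hidden projectors. The only cosmetic difference is that the paper verifies $(\gradG\bu)_{km}\partial_r P_{mk}=0$ directly from $\partial_r\bP=-(\partial_r\bn)\bn^T-\bn(\partial_r\bn^T)$ and $\bP\bn=0$, whereas you reach the same conclusion by inserting the tangentiality of $\gradG\bu$ and citing the recorded identity $\bP(\partial_j\bP)\bP=0$; these are equivalent.
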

\begin{proof}
 We use the representations introduced above and thus get
 \begin{equation} \label{h1}
  (\bP \divG \gradG^T \bu)_i= P_{is} \divG (\gradG^T \bu)_s=P_{is}P_{lk} \partial_k (\gradG \bu)_{ls}.
 \end{equation}
Furthermore,
\[
 \nabla_k(\gradG \bu)_{ki} = P_{kr}\partial_r (\gradG \bu)_{ls}P_{si}P_{lk}= P_{is} P_{lr}\partial_r (\gradG \bu)_{ls},
\]
and comparing this with \eqref{h1} proves the result in \eqref{p1}.
Note that using $P_{lk}n_k=P_{ms}n_m=0$ (where $n_j$ denotes the $j$-th component of the normal vector $\bn$) we get
\[
 (\gradG \bu)_{km}\partial_r P_{mk}= -P_{ms}\partial_s u_l P_{lk}\big((\partial_r n_m) n_k+n_m(\partial_r n_k)\big)=0.
\]
Using this we get
\begin{align}
 \nabla_i(\gradG \bu)_{kk}& = P_{ir}\partial_r (\gradG \bu)_{nm}P_{mk}P_{nk}= P_{ir}\partial_r(\gradG \bu)_{nm} P_{mn}=P_{ir}\partial_r \big((\gradG \bu)_{nm}P_{mn}\big) \nonumber \\
 & = P_{ir}\partial_r (P_{mk} \partial_k u_l P_{ln}P_{mn})= P_{ir}\partial_r (P_{mk}\partial_k u_l P_{lm}) \nonumber \\
 & = P_{ir}\partial_r(P_{lk}\partial_k u_l)= P_{ir}\partial_r (\divG \bu)= \nabla_i(\divG \bu),\label{ppp}
\end{align}
and thus the identity \eqref{p2} holds.
\end{proof}
\ \\[1ex]
We now derive a result for the commutator $\nabla_k \nabla_i u_k-\nabla_i \nabla_k u_k$.
\begin{lem} \label{hulplem2} Let $\bH= \nabla_\Gamma\bn$ be the Weingarten mapping. Then for $\bu \in C^2(\Gamma)^n$ with $\bP\bu=\bu$ the identity
 \[
   \nabla_k \nabla_i u_k-\nabla_i \nabla_k u_k= \big( (\tr(\bH) \bH -\bH^2)\bu\big)_i, \quad i=1,\ldots,n,
 \]
holds.
\end{lem}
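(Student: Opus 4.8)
The plan is a direct tensor computation with the index formulas for the covariant derivatives collected at the start of this Appendix, together with Lemma~\ref{lemhulp1}. By \eqref{p1}, after using the matrix covariant-derivative formula and the contraction $P_{ka}P_{bk}=P_{ba}$, and by \eqref{p2} together with the scalar formula $\nabla_i f=P_{ir}\partial_r f$ applied to $\divG\bu=P_{lk}\partial_ku_l$, one has
\[
\nabla_k\nabla_i u_k=P_{ba}P_{ci}\,\partial_a\big(P_{ca'}\partial_{a'}u_{b'}P_{b'b}\big),\qquad
\nabla_i\nabla_k u_k=P_{ir}\,\partial_r\big(P_{lk}\partial_k u_l\big),
\]
where $\partial_j=\partial/\partial x_j$ and all functions are extended constantly along $\bn$. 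First I would apply the Leibniz rule to both right-hand sides, in each case isolating the term that carries the second-order derivative $\partial_a\partial_{a'}u_{b'}$ from the terms in which a factor $\bP$ is differentiated.

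The key first observation is that the two second-order contributions cancel in the difference: after relabelling and using $\partial_a\partial_{a'}u=\partial_{a'}\partial_a u$, both reduce to $P_{b'a}P_{ia'}\partial_a\partial_{a'}u_{b'}$, so they drop out of the commutator and only the $\bP$-derivative terms remain. For these I would substitute $\partial_jP_{lk}=-(\partial_jn_l)n_k-n_l\partial_jn_k$ and simplify using $\bP\bn=0$, $\bn\cdot\partial_j\bn=0$, the constant-normal-extension identity $(\nabla\bu)\bn=0$, and — obtained by differentiating $\bn\cdot\bu=0$ — the relation $n_l\partial_ku_l=-(\partial_kn_l)u_l$. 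A short inspection then shows that one of the two $\bP$-derivative terms of $\nabla_k\nabla_i u_k$ vanishes identically, and that in each of the remaining terms one of the two summands is annihilated by an $\bn$-contraction; after inserting $n_l\partial_ku_l=-(\partial_kn_l)u_l$, what survives from $\nabla_k\nabla_i u_k$ collapses to $\tr(\bH)(\bH\bu)_i$ and what survives from $\nabla_i\nabla_k u_k$ collapses to $(\bH^2\bu)_i$. Here one uses that on $\Gamma$ one has $\nabla\bn=\bH=\bH^T$, $\tr(\bP\nabla\bn)=\tr(\bH)$, and $\sum_r P_{ir}\partial_rn_k=H_{ik}$. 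Subtracting yields the claimed identity, and — combined with Lemma~\ref{lemhulp1} — it immediately gives formula \eqref{Res1} of Lemma~\ref{LemGauss}.

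The only genuine obstacle is the index bookkeeping: arranging the handful of first-order terms so that the cancellation of the second derivatives and the various $\bn$-contraction vanishings are transparent, and correctly matching the two surviving sums to $\tr(\bH)\bH$ and $\bH^2$. No idea beyond this systematic use of $\bP^2=\bP$, $\bP\bn=0$, and the first-order derivative relations for $\bn$ and $\bu$ is needed.
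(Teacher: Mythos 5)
Your proposal is correct and follows essentially the same route as the paper's Appendix proof: a direct Cartesian index computation starting from \eqref{p1}--\eqref{p2}, cancellation of the symmetric second-order terms, and expansion of $\partial_j P_{lk}=-(\partial_j n_l)n_k-n_l\partial_j n_k$ together with $n_l\partial_k u_l=-H_{kl}u_l$, yielding the surviving contributions $\tr(\bH)(\bH\bu)_i$ and $(\bH^2\bu)_i$. The only (legitimate) shortcut is that you annihilate the two terms containing the normal derivative $n_s\partial_s u_l$ via the constant-normal-extension identity $(\nabla\bu)\bn=0$ fixed in the Appendix, whereas the paper instead shows these two terms cancel each other using $\bP\bu=\bu$ and $\bH\bn=0$, so that step does not rely on the particular extension.
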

\begin{proof}
 By definition we have
 \[
  \nabla_k \nabla_i u_k= P_{kr}\partial_r (\gradG \bu)_{nm}P_{mi}P_{nk}= \partial_r (P_{ms}\partial_s u_l P_{ln}) P_{mi}P_{nr}.
 \]
We use the product rule, $H_{rl}=\partial_r n_l$, $\partial_r P_{ms}=-\partial_r(n_m n_s)=- H_{rm}n_s-H_{rs}n_m$, $P_{mi}n_m=0$,  and thus obtain
\begin{align*}
 \nabla_k \nabla_i u_k &= \big(\partial_r P_{ms} \partial_s u_l P_{ln} +P_{ms} \partial_r \partial_s u_l P_{ln}+ P_{ms}\partial_s u_l \partial_r P_{ln}\big) P_{mi}P_{nr}\\
 & = -H_{rm} n_s \partial_s u_l P_{lr}P_{mi}+P_{is}P_{lr}\partial_s \partial_ru_l -H_{rn} n_l \partial_s u_l P_{is} P_{nr}.
\end{align*}
We also have, cf.~\eqref{ppp},
\begin{align*}
\nabla_i \nabla_k u_k& = P_{ir}\partial_r(P_{lk}\partial_k u_l)= P_{ir}\partial_r P_{lk}\partial_k u_l +P_{ir}P_{lk} \partial_r \partial_k u_l \\
&= -P_{ir}(H_{rl}n_k+H_{rk}n_l)\partial_k u_l+P_{ir}P_{lk} \partial_r \partial_k u_l .
\end{align*}
Hence, for the difference we get
\begin{align*}
&  \nabla_k \nabla_i u_k-\nabla_i \nabla_k u_k \\ & =H_{rl}n_k\partial_k u_lP_{ir} -H_{rm} n_s \partial_s u_lP_{lr}P_{mi}+ H_{rk}n_l\partial_k u_lP_{ir} -H_{rn} n_l \partial_s u_l P_{is} P_{nr}.
\end{align*}
Using $\bP \bu=\bu$ we get
\begin{align*}
 H_{rm} n_s \partial_s u_lP_{lr}P_{mi}& = H_{rm} n_s \partial_s(P_{lr}u_l) P_{mi}- H_{rm} n_su_l \partial_sP_{lr} P_{mi} \\
  &= H_{mr} n_s \partial_s u_r P_{im}- H_{rm} n_s \partial_sP_{lr} P_{mi}u_l.
\end{align*}
Furthermore, using $\bH \bn=0$, we get
\[
 H_{rm} n_s \partial_sP_{lr} P_{mi}u_l= -H_{rm} n_s(H_{sl}n_r +H_{sr}n_l)P_{mi}u_l =0.
\]
Combining these results we get
\[
 \nabla_k \nabla_i u_k-\nabla_i \nabla_k u_k= H_{rk}n_l\partial_k u_lP_{ir} -H_{rn} n_l \partial_s u_l P_{is} P_{nr}.
\]
Using $\bn^T \bu=0$ (in a neighborhood of $\Gamma$) we get $\partial_k(n_lu_l)=0$ and in combination with $\bH\bP=\bP\bH=\bH$ we get
\[ \begin{split}
 H_{rk}n_l\partial_k u_lP_{ir} & = - H_{rk}\partial_k n_lP_{ir} u_l= - H_{rk}H_{kl} P_{ir}u_l \\ & = -H_{ik}H_{kl}u_l= -(H^2)_{il} u_l = -(\bH^2 \bu)_i.
\end{split} \]
Finally note that
\[ \begin{split}
 H_{rn} n_l \partial_s u_l P_{is} P_{nr} & = -H_{rn}\partial_s n_lP_{is} P_{nr} u_l=-H_{rn}H_{sl}P_{is} P_{nr} u_l \\ & =  -H_{rr} H_{il}u_l= - \tr(\bH) (\bH \bu)_i.
\end{split} \]
Combining these results completes the proof.
\end{proof}
\ \\[1ex]
By combining the results of Lemma~\ref{lemhulp1} and Lemma~\ref{hulplem2} we have proved the result \eqref{Res1}. Let $\bA$ be an $n \times n$ matrix with $\bP\bA=\bA\bP=\bA$, hence $\bA \bn= \bA^T \bn=0$. We then have
\begin{align*}
 \bn\cdot \divG \bA&  = n_i (\divG \bA)_i = n_i P_{lk}\partial_k A_{il} = P_{lk}\partial_k (n_iA_{il})-P_{lk} \partial_k n_i A_{il} \\
 &=- P_{lk} H_{ki}A_{il} = - H_{li} A_{il}= -(\bH \bA)_{ll}= -\tr(\bH\bA),
\end{align*}
and combining this with $\tr(\bH \bA)=\tr(\bA^T \bH)= \tr(\bH \bA^T)$ one obtains the result in \eqref{Res1a}. The result in \eqref{newres} follows from (we use $\bH \bn =0$, $\bn^T \bH=0$):
\begin{align*}
 (\bP\divG (\bH))_i & = P_{ij}P_{lk}\partial_k H_{jl} = P_{ij} P_{lk} \partial_k \partial_j n_l = P_{ij} P_{lk} \partial_j \partial_k n_l =P_{ij}P_{lk}\partial_j H_{kl} \\
& = P_{ij} \partial_j (P_{lk}H_{kl}) - P_{ij} (\partial_j P_{lk})H_{kl}  \\ & =P_{ij}  \partial_j H_{ll} + P_{ij}\big( (\partial_j n_l)n_k + (\partial_j n_k)n_l\big) H_{kl}
 =  P_{ij} \partial_j \kappa = (\nabla_\Gamma \kappa)_i.
\end{align*}

\begin{lem} \label{LLem} For $n=3$ the identity
 \[ \tr(\bH) \bH -\bH^2= K\bP,
 \]
with $K$ the Gauss curvature, holds.
\end{lem}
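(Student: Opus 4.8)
The plan is to reduce the identity to a two-dimensional linear-algebra fact (the Cayley--Hamilton theorem) applied to the Weingarten map restricted to the tangent plane. Fix a point $x \in \Gamma$. The operator $\bH = \bH(x)$ is symmetric (as noted in the proof of Lemma~\ref{ident1}, since $\bH = \nabla^2 d$) and satisfies $\bH\bn = 0$ together with $\bP\bH = \bH\bP = \bH$; hence it maps the tangent plane $\bP\Rn$, which for $n=3$ is two-dimensional, into itself and annihilates $\bn$. Both sides of the claimed identity are therefore symmetric $3\times 3$ matrices that vanish on $\bn$ and have range in $\bP\Rn$, so it suffices to verify the equality as an identity of linear operators on the two-dimensional space $\bP\Rn$, on which $\bP$ acts as the identity.

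Then I would diagonalize $\bH$ on the tangent plane: choose an orthonormal eigenbasis $\{\mathbf{t}_1,\mathbf{t}_2\}$ of $\bP\Rn$ with $\bH\mathbf{t}_i = \kappa_i\mathbf{t}_i$, where $\kappa_1,\kappa_2$ are the principal curvatures. Then $\bH = \kappa_1\mathbf{t}_1\mathbf{t}_1^T + \kappa_2\mathbf{t}_2\mathbf{t}_2^T$, $\bH^2 = \kappa_1^2\mathbf{t}_1\mathbf{t}_1^T + \kappa_2^2\mathbf{t}_2\mathbf{t}_2^T$, $\tr(\bH) = \kappa_1+\kappa_2$, and $\bP = \mathbf{t}_1\mathbf{t}_1^T + \mathbf{t}_2\mathbf{t}_2^T$. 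Substituting gives
\[
\tr(\bH)\bH - \bH^2 = \big((\kappa_1+\kappa_2)\kappa_1 - \kappa_1^2\big)\mathbf{t}_1\mathbf{t}_1^T + \big((\kappa_1+\kappa_2)\kappa_2 - \kappa_2^2\big)\mathbf{t}_2\mathbf{t}_2^T = \kappa_1\kappa_2\,\bP,
\]
and since $K = \kappa_1\kappa_2$ by definition of the Gauss curvature, the identity follows. Equivalently one may invoke Cayley--Hamilton for the $2\times 2$ operator $\bH|_{\bP\Rn}$ directly: $\bH^2 - \tr(\bH)\,\bH + \det(\bH|_{\bP\Rn})\,\bP = 0$ with $\det(\bH|_{\bP\Rn}) = \kappa_1\kappa_2 = K$.

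There is essentially no serious obstacle here; the only point requiring a little care is the bookkeeping between the ambient $3\times 3$ matrices and the $2\times 2$ tangential operators --- that is, confirming that $\bP$ (not $\bI$) is the relevant identity and that all matrices in the statement are genuinely tangential, which is immediate from $\bH\bn = 0$ and $\bP\bH\bP = \bH$. If one prefers an index-based argument in the style of the appendix, the same computation can be written out using the spectral decomposition $H_{ij} = \kappa_1 (t_1)_i(t_1)_j + \kappa_2 (t_2)_i(t_2)_j$ without naming Cayley--Hamilton.
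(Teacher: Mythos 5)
Your proof is correct and follows essentially the same route as the paper, which applies the Cayley--Hamilton theorem to $\bH=\bP\bH\bP$ viewed as an operator on the two-dimensional space ${\rm range}(\bP)$, giving $\bH^2-\tr(\bH)\bH+\det(\bH)\bP=0$ with $\det(\bH)=K$. Your explicit diagonalization in the principal-curvature eigenbasis is just a hands-on verification of that same two-dimensional identity, and you note the Cayley--Hamilton formulation yourself.
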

\begin{proof}
We apply the Cayley-Hamilton theorem to the linear mapping $\bP \bH \bP=\bH:\, {\rm range}(\bP) \to {\rm range}(\bP)$. Note that $\dim( {\rm range}(\bP))=2$.  This yields
\begin{equation} \label{eq67}
  \bH^2 -\tr(\bH)\bH +\det(\bH)\bP=0,
\end{equation}
and using $\det(\bH)=K$ we obtain
the desired result.
\end{proof}
\ \\[1ex]
The result in \eqref{Res2} follows from \eqref{Res1} and Lemma~\ref{LLem}. As a corollary of \eqref{eq67} we obtain for $n=3$ the identity
\begin{equation} \label{iddd}
 \kappa \bP-\bH =K \bH^\dagger,
\end{equation}
where $\bH^\dagger$ is the generalized inverse of  $\bH$. Note that $K \bH^\dagger$ has the same eigenvalues and eigenvectors as $\bH$, but the eigenpairs are not the same.

\end{document}